\theoremstyle{plain}
\newtheorem{theorem}{Theorem}[section]
\newtheorem{proposition}[theorem]{Proposition}
\theoremstyle{definition}
\newtheorem{definition}[theorem]{Definition}
\theoremstyle{remark}
\newtheorem{remark}[theorem]{Remark}
\newcommand{\appropto}{\mathrel{\vcenter{
  \offinterlineskip\halign{\hfil$##$\cr
    \propto\cr\noalign{\kern2pt}\sim\cr\noalign{\kern-2pt}}}}}
\icmltitlerunning{Efficiently Vectorized MCMC on Modern Accelerators}
\tikzstyle{mybraces}=[mirrorbrace/.style={
\begin{document}

\twocolumn[
\icmltitle{Efficiently Vectorized MCMC on Modern Accelerators}



\icmlsetsymbol{equal}{*}

\begin{icmlauthorlist}
\icmlauthor{Hugh Dance}{gatsby}
\icmlauthor{Pierre Glaser}{gatsby}
\icmlauthor{Peter Orbanz}{gatsby}
\icmlauthor{Ryan Adams}{princeton}
\end{icmlauthorlist}

\icmlaffiliation{gatsby}{Gatsby Unit, University College London, UK}
\icmlaffiliation{princeton}{Department of Computer Science, Princeton University, USA}
\icmlcorrespondingauthor{Hugh Dance}{hugh.dance.15@ucl.ac.uk}

\icmlkeywords{Machine Learning, ICML}

\vskip 0.3in
]



\printAffiliationsAndNotice{}  

\newcommand{\mc}{\mathcal}
\newcommand{\bb}{\mathbb}

\begin{abstract}
\noindent With the advent of automatic vectorization tools (e.g., JAX's \verb|vmap|), writing multi-chain MCMC algorithms is often now as simple as invoking those tools on single-chain code. Whilst convenient, for various MCMC algorithms this results in a synchronization problem---loosely speaking, at each iteration all chains running in parallel must wait until the last chain has finished drawing its sample. In this work, we show how to design single-chain MCMC algorithms in a way that avoids synchronization overheads when vectorizing with tools like \verb|vmap|, by using the framework of finite state machines (FSMs). Using a simplified model, we derive an exact theoretical form of the obtainable speed-ups using our approach, and use it to make principled recommendations for optimal algorithm design. We implement several popular MCMC algorithms as FSMs, including Elliptical Slice Sampling, HMC-NUTS, and Delayed Rejection, demonstrating speed-ups of up to an order of magnitude in experiments.
\end{abstract}

\section{Introduction}

Automatic vectorization is the act of transforming a function to handle batches of inputs without user intervention. Implementations of automatic vectorization algorithms are now available in many mainstream scientific computing libraries, and have dramatically simplified the task of running multiple instances of a single algorithm concurrently. 
They are routinely used to train neural networks \citep{flax} and in other scientific applications, e.g., \citet{schoenholz2021jax,oktay2023neuromechanical,pfau2020ab}.

This paper focuses on the use of automatic vectorization for Markov chain Monte Carlo (MCMC) algorithms. Tools like JAX's \verb|vmap| provide a convenient way to run multiple MCMC chains in parallel: one can simply write single-chain MCMC code, and call \verb|vmap| to turn it into vectorized, multi-chain code that can run in parallel on the same processor. Many state-of-the-art MCMC libraries have consequently adopted machine learning frameworks with automatic vectorization tools as their backend.

One limitation with automatic vectorization tools is how they handle control flow. Since all instructions must be executed in lock-step, if the algorithm has a while loop, all chains must wait until the last chain has finished its iterations. This can lead to large inefficiencies for MCMC algorithms that generate each sample using variable-length while loops. Roughly speaking, if vectorization executes, say, 100 chains in parallel, all but one finish after at most 10 steps, and the remaining chain runs for 1000 steps, then about 99\% of the GPU capacity assigned to \verb|vmap| is wasted (and our simulations show the effect can indeed be this drastic). For the No-U-Turn Sampler (HMC-NUTS) \citep{hoffman2014no}, this problem is well-documented \citep{blackJAX, sountsov2024running, radul2020automatically}. It also affects various other algorithms, such as variants of slice sampling \citep{neal2003slice, murray2010elliptical, cabezas2023transport}, delayed rejection methods \citep{mira2001metropolis, modi2024delayed} and unbiased Gibbs sampling \citep{qiu2019unbiased}.

In this work, we show how to transform MCMC algorithms into equivalent samplers that avoid these synchronization barriers when using \verb|vmap|-style vectorization. In particular,
\begin{enumerate}
    \item We develop a novel approach to {\color{black}transform} MCMC algorithms into finite state machines (FSMs), that can avoid synchronization barriers with \verb|vmap|.
    \item We analyze the time complexity of our FSMs against standard MCMC implementations and derive a theoretical bound on the speed-up under a simplified model.
    \item We use our analysis to develop principled recommendations for optimal FSM design, which enable us to nearly obtain the theoretical bound in speed-ups for certain MCMC algorithms. 
    \item We implement popular MCMC algorithms as FSMs, including Elliptical Slice Sampling, HMC-NUTS, and Delayed-Rejection MH - demonstrating speed-ups of up to an order of magnitude in experiments.
\end{enumerate}

\section{Background and Problem Setup}
\def\x{{\bm x}}

In this section, we briefly review how MCMC algorithms are vectorized, explain the synchronization problems that can arise, and formalize the problem mathematically.

\subsection{MCMC Algorithms and Vectorization}
 MCMC methods aim to draw samples from a target distribution \(\pi\) (typically on a subset of $\bb R^d$) which is challenging to sample from directly. To do so, they generate samples $\{\bm x_1,...,\bm x_n\} \in \bb R^{d \times n}$ from a Markov chain with transition kernel $P$ and invariant distribution $\pi$, by starting from an initial state \(\bm{x}_0 \in \bb R^d\) and iteratively sampling \(\bm{x}_{i+1} \sim P(\cdot|\bm{x}_i)\). For aperiodic and irreducible Markov chains, as $n \to \infty$ the samples will converge in distribution to $\pi$ \citep{brooks2011handbook}. In practice, $P$ is implemented by a deterministic function \verb|sample|, which takes in the current state $\bm x_i$ and a pseudo-random state $r_i \in \bb N$, and returns new states $\bm x_{i+1}$ and~$r_{i+1}$. This procedure is given in \cref{alg:standardmcmc}.
 \vspace{-0pt}
\begin{algorithm}
\caption{MCMC algorithm with \texttt{sample} function}
\label{alg:standardmcmc}
\begin{algorithmic}[1]
\STATE \textbf{Inputs}: sample $\bm x_0$, seed $r_0$ 
\FOR{$i \in \{1,...,n\}$}
\STATE generate $\x_i, r_i\gets\verb|sample(|\x_{i-1},r_{i-1}\verb|)|$
\ENDFOR
\STATE \textbf{return} $\x_1,\ldots,\x_n$
\end{algorithmic}
\end{algorithm}
 \vspace{-8pt}
 
Practitioners commonly run \cref{alg:standardmcmc} on $m$ different initializations, producing $m$ chains of samples. Given an implementation of \verb|sample|$: \bb R^d \times \bb N \to \bb R^d \times \bb N$, one way to do this is to use an automatic vectorization tool like JAX's \verb|vmap|\footnote{We use JAX and its vectorization map \texttt{vmap} throughout, since this framework is widely adopted. Similar constructs exist in TensorFlow (\texttt{vectorized\_map}) and PyTorch (\texttt{vmap}).}. \verb|vmap| takes \verb|sample| as an input and returns a new program, \verb|vmap(sample)|$: \bb R^{d \times m} \times \bb N^m \to \bb R^{d \times m} \times \bb N^m$, which operates on a batch of inputs collected into tensors
\vspace{-20pt}
 
\begin{align*}
  \tilde{\x}_{i-1}\;&:=\;(\x_{i-1,1},\ldots,\x_{i-1,m})\;\in\;\mathbb{R}^{d \times m} \\
  \tilde {r}_{i-1}\;&:=\;(r_{i-1,1},\ldots,r_{i-1,m})\;\in\;\mathbb{N}^{m}\;
\end{align*}
 \vspace{-20pt}
 
and returns the corresponding outputs from \verb|sample|. One can therefore turn \cref{alg:standardmcmc} into a multi-chain algorithm by simply replacing \verb|sample| with \verb|vmap(sample)| and replacing $(\bm x_0, r_0)$ by $(\tilde {\bm x}_0, \tilde r_0)$ (see \cref{alg:standardmcmc_vec} in \cref{app:details}).  Under the hood, \verb|vmap| transforms every instruction in \verb|sample| (e.g., a dot product) into a corresponding instruction operating on a batch of inputs (e.g., matrix-vector multiplication); that is, it `vectorizes' \verb|sample|. These instructions are executed in lock-step across all chains. Using \verb|vmap| usually yields code that performs as well as manually-batched code. For this reason, as well as its simplicity and composability with other transformations like \verb|grad| (for automatic differentiation) and \verb|jit| (for Just-In-Time compilation),  \verb|vmap| has been adopted by major MCMC libraries such as NumPyro and BlackJAX.

\subsection{Synchronization Problems with While Loops} \label{sec:problem}

Control flow (i.e., \verb|if/else|, \verb|while|, \verb|for|, etc.) poses a challenge for vectorization, because different batch members may require a different sequence of instructions. \verb|vmap| solves this by executing all instructions for all batch members, and masking out irrelevant computations. As a consequence, if \verb|sample| contains a \verb|while| loop, then \verb|vmap(sample)| will execute the body of this loop for all chains until all termination conditions are met. Until then, no further instruction can be executed. As a result, if there is high variation in the number of loop iterations across chains, running \cref{alg:standardmcmc} with \texttt{vmap(sample)} introduces a synchronization barrier across all chains: \emph{at every iteration, each chain has to wait for the slowest \texttt{sample} call}.  

This issue arises in practice, because a number of important MCMC algorithms have while loops in their \verb|sample| implementations: such as variants of slice sampling \citep{neal2003slice, murray2010elliptical, cabezas2023transport}, delayed rejection methods \citep{mira2001metropolis,modi2024delayed}, the No-U-Turn sampler \citep{hoffman2014no}, and unbiased Gibbs sampling \citep{qiu2019unbiased}. 

{\bfseries Formalizing The Problem.} Here we formalize the problem through a series of short derivations. These will be made precise in \cref{sec:analysis}.  Suppose we run a \verb|vmap|'ed version of \cref{alg:standardmcmc} using a \verb|sample| function that has a while loop. Let  $N_{i,j}$ denote the number of iterations required by the $j$th chain to obtain its $i$th sample. If the while loop has a variable length, $N_{i,j}$ is a random number. Due to the synchronization problem described above, the time taken to run \verb|vmap(sample)| at iteration $i$ is approximately proportional to the largest $N_{i,j}$ out of the $m$ chains, $\max_{j\leq m}N_{i,j}$. If we ignore overheads, the total runtime after $n$ samples, $C_0(n)$, is then roughly proportional to
\begin{align}
    C_0(n) \;\appropto \; \sum_{i=1}^n\;\max_{j\leq m}N_{i,j} \label{eq:c_0}
\end{align}
By contrast, if the chains could be run without any synchronization barriers, the time taken would instead be 
\vspace{-5pt}
\begin{align}
    C_*(n) \;\appropto \; \max_{j \leq m}\;\sum_{i=1}^n\;N_{i,j} \label{eq:c_*}
\end{align}
The key difference is that the maximum is now outside the sum. This reflects the fact that when running independently, we only have to wait at the end for the slowest chain to collect its $n$ samples, rather than waiting at every iteration. Clearly $C_*(n) \leq C_0(n)$. Significant speed-ups are obtainable by de-synchronizing the chains when $C_*(n) \ll C_0(n)$. If each $N_{i,j}$ converges in distribution to some $\bb P_N$ as we draw more samples, and an appropriate central limit theorem holds (so that the sums in \eqref{eq:c_0} and \eqref{eq:c_*} behave like scaled means), we can expect for large enough $n$ that:
\begin{align}
 C_0(n) & \; =  \; \mc O_p(n \,\bb E \max_{j\leq m} \, N_{\infty,j}) \label{eq:c0} \\
 C_*(n) & \; = \;  \mc O_p(\max_{j\leq m}\, n \, \bb E N_{\infty,j}) \; = \; \mc O_p( n \, \bb E N_{\infty,1} )\label{eq:c*}
\end{align}
where the last equality assumes $(N_{\infty,1},...,N_{\infty,m}) \overset{iid}{\sim} \bb P_N$. De-synchronizing the chains will result in large speedups if
\begin{equation}
  \label{eq:speedup}
  \bb E \max_{j\leq m} N_{\infty,j} \; \gg \;  \bb E N_{\infty,1}
\end{equation}
We will see that \cref{eq:speedup} holds in various situations. 

\begin{figure}
    \centering
    \includegraphics[width=0.49\linewidth]{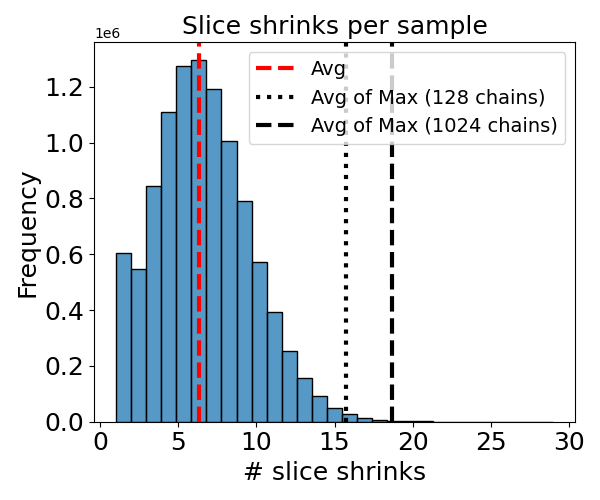}
    \includegraphics[width=0.49\linewidth]{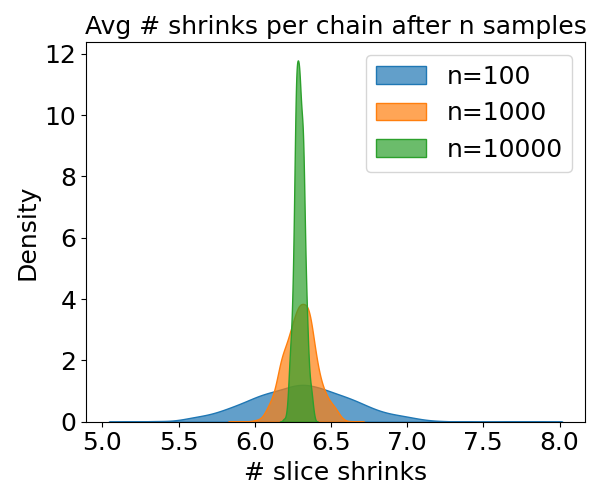}
    \caption{Statistics on the elliptical slice sampler for Gaussian Process Regression on the Real Estate Dataset \citep{real_estate_valuation_477}. LHS: histogram of the number of slice shrinks per sample. RHS: smoothed histogram of the average number of slice shrinks per sample across $m=1024$ chains, after $n \in \{100,1000,10000\}$ samples.}
    \label{fig:concentration}
\end{figure}

\paragraph{Example.} Consider the elliptical slice sampling algorithm \citep{murray2010elliptical} which samples from distributions which admit a density with Gaussian components, $p(\bm x) \propto f(\bm x) \mc N(\bm x|0,\Sigma)$. Its transition kernel (see \cref{alg:ellip_kernel} in \cref{app:details}) draws each sample by (i) generating a random ellipse of permitted moves and an initial proposal, and (ii) iteratively shrinking the set of permitted moves and resampling the proposal from this set until it exceeds a log-likelihood threshold. The second stage uses a while loop which requires a random number of iterations.

On \cref{fig:concentration} we display results when implementing this algorithm in JAX to sample from the hyperparameter posterior of a Gaussian process implemented on a regression task using a real dataset from the UCI repository (details are in \cref{sec:experiments}). On the LHS we can see the distribution of the number of while loop iterations (i.e. slice shrinks) needed to generate a sample. While the average is $\sim$6, the average of the \emph{maximum} across 1024 chains is $>$18, which implies  $\bb E \max_{j\leq 1024} N_{n,j}\approx 3 \bb E N_{n,1}$. On the RHS, we can see that the differences across the chains do balance out as more samples are drawn, which suggests a certain central limit theorem may hold. In particular, after just 100 samples the distribution of the \emph{average} number of iterations per chain is contained in the interval $[5,8]$, and after 10,000 samples this shrinks to $[6.2,6.4]$. This implies that if we could vectorize the algorithm without incurring synchronization barriers, we could improve the Effective Sample Size per Second (ESS/Sec) by up to 3-fold. We will see that for other algorithms (e.g., HMC-NUTS and delayed rejection) the potential speed-ups are much larger than this.

\section{Finite State Machines for MCMC} \label{sec:fsm}
In this section we present an approach to implementing MCMC algorithms, which avoids the above synchronization problems when vectorizing with \verb|vmap|. The basic idea is to break down the transition kernel (\verb|sample|) into a series of smaller `steps' which avoid iterative control flow like while loops and have minimal variance in execution time. We then define a runtime procedure that allows chains to progress through their own step sequences in de-synchronized fashion. To do this in a principled manner, we adapt the framework of finite state machines (FSMs) \citep{hopcroft2001introduction}. An FSM is a {5-tuple $(\mc S, \mc Z, \delta, \mc B, \mc F)$}, where $\mc S$ is a finite set of states, $\mc Z$ is a finite input set, $\delta: S \times \mc Z \to S$ is a transition function, $\mc B \in S$ is an initial state, and $\mc F$ is the set of final states. We deviate from the usual definition by allowing each state to be a function $S:\mc Z \to \mc Z$ that updates inputs dynamically\footnote{This essentially results in a so-called `Finite State Transducer' with a different runtime than typical.}. Below we show how to represent different \verb|sample| functions as an FSM.

\subsection{\texttt{sample}-to-FSM conversion}

 At a high-level, we construct the FSM of an MCMC algorithm as follows. The states $\mc S = (S_1,...,S_K)$ are chosen as functions $\mc Z \to \mc Z$ which execute contiguous code blocks of the algorithm. The boundaries of each code block are delineated by the start and end of any while loops. For example, $S_1$ executes all lines of code before the first while loop, $S_2$ executes all lines of code from the beginning of the first while loop body to either the start of the next while loop or the end of the current while loop, and so on. The inputs $z \in \mc Z$ are all variables used in each code block. (e.g. current sample $\bm x$, proposal $\bm x'$, log-likelihood $\log p(\bm x)$)\footnote{Typically $\bm x \in \bb R^d$ so $\mc Z$ is not finite. However, in practice all samples lie in a finite subset of $\bb R^d$ (e.g. 32-bit floating points).}, whilst the transition function $\delta$ selects the next code block to run according to the relevant loop condition and the received output $z'$ after executing the current block (e.g. if a while loop starts after the block $S_1$ executes, $\delta$ will check this loop's condition using the output $z' = S_1(z)$). Below we make this construction procedure more precise for three kinds of \verb|sample| functions which cover all MCMC algorithms considered in the present work: (i) functions with a single while loop, (ii) functions with two sequential while loops, and (iii) functions with two nested while loops.  An automated construction process for more general programs using a toy language is given in \cref{app:details}. 

\begin{figure}[h]
    \centering
  \begin{tikzpicture}[mybraces,>=stealth]
    \begin{scope}[yshift=0cm]
    \draw[color=gray!40!white,fill] (0,0)--(3,0)--(3,-1)--(0,-1)--cycle;
    \node at (1.5,-.5) {code block 1};
    \draw[brace] (3.3,0)--(3.3,-1) node[pos=.5,label=right:{$=:S_1$}] {};
    \end{scope}
    \node at (1.25,-1.5) {\verb|while|\,\ldots\,\verb|do {|};
    \begin{scope}[yshift=-2cm]
    \draw[color=gray!40!white,fill] (0,0)--(3,0)--(3,-1)--(0,-1)--cycle;
    \node at (1.5,-.5) {code block 2};
    \draw[brace] (3.3,0)--(3.3,-1) node[pos=.5,label=right:{$=:S_2$}] {};
    \end{scope}
    \node at (1.2,-3.5) {\verb|} end while|};
    \begin{scope}[yshift=-4cm]
    \draw[color=gray!40!white,fill] (0,0)--(3,0)--(3,-1)--(0,-1)--cycle;
    \node at (1.5,-.5) {code block 3};
    \draw[brace] (3.3,0)--(3.3,-1) node[pos=.5,label=right:{$=:S_3$}] {};
    \end{scope}
    \begin{scope}[xshift=6cm,yshift=-.5cm]]
      \node[circle,thick,draw] (s1) at (0,0) {$S_1$};
      \node[circle,thick,draw] (s2) at (0,-2) {$S_2$};
      \node[circle,thick,draw] (s3) at (0,-4) {$S_3$};
      \draw[->,thick] (s1.south)--(s2.north);
      \draw[->,thick] (s2.south)--(s3.north);
      \draw[->,thick] (s1.south west) .. controls (-1,-1) and (-1,-3)
      .. (s3.north west);
      \draw[->,thick] (s2.north east) .. controls (1.2,-1.5) and (1.2,-2.5)
      .. (s2.south east);
    \end{scope}
  \end{tikzpicture} 
\caption{FSM of an MCMC algorithm with a single while loop.}
\label{fig:fsm}
\end{figure}

{\bfseries The Single While Loop Case.} The simplest case is a \verb|sample| method with a single while loop. This covers (for example) elliptical slice sampling \citep{murray2010elliptical} and symmetric delayed rejection Metropolis-Hastings \citep{mira2001metropolis}. In this case, we break \verb|sample| down into three code blocks $B_1,B_2,B_3$ (one before the while loop, one for the body of the while loop, and one after the while loop) and the termination condition of the loop. This is shown on the LHS of \cref{fig:fsm}. Using these blocks, we define the FSM as $(\mc S, \mc Z, \delta, \mc B, \mc F)$, where
(1) $\mathcal Z$ is the set of possible values for the local variables of \verb|sample| (e.g., the current sample $\x$, seed $r$,  and proposal etc.), (2) $\mc S = \{S_1,S_2,S_3\}$ is a set of three functions ${\mathcal Z \to \mathcal Z}$, where for each $k \in \{1,2,3\}$, $S_k(z)$ runs $B_k$ on local variables $z$ and returns their updated value,
(3) for each $k \in \{0,1\}$ and $z \in \mc Z$, the transition function $\delta(S_k,z)$ checks the while loop termination condition using $z$, and returns $S_2$ if False and $S_3$ if True, (4) $\mc B = S_1$ and, finally, (5) $\mc F = S_3$. The RHS of \cref{fig:fsm} illustrates the resulting FSM diagram. Note there is an edge ${S_k \to S_{\tilde k}}$ between states if and only if ${\delta(S_k,z)=S_{\tilde k}}$ for some $z$.  

\textbf{Two Sequential While Loops.}
We break down \texttt{sample} into two blocks: $B_1$ contains all the code up to  the second while loop. $B_2$ contains the remaining code. In this case, $B_1$ and $B_2$ are now single while loop programs, and thus can both be represented by FSMs $F_1$ and $F_2$ using the above rule. The FSM representation of \texttt{sample} can then be obtained by ``stitching together'' $F_1$ and $F_2$. The full construction process is in \cref{app:details}. The resulting FSM is provided for the case of the Slice Sampler (Figure \ref{fig:fsm_transition}, top-right panel), which contains two\footnote{For 1D problems, the slice expansion loop can be broken into two loops (for the upper and lower bound of the interval).} while loops: one for expanding the slice, and one for contracting the slice. 

\textbf{Two Nested While Loops.}
In the case of two nested while loops, we break down \texttt{sample} into $B_1, B_2, B_3$, where $B_1$ (resp. $B_3$) is the code before (resp. after) the outer while loop and $B_2$  is the outer while loop body. As $B_2$ is a single-while loop program, it admits its own FSM $F_{i}$. Building the final FSM of \texttt{sample} then informally consists in obtaining a first, ``coarse'' FSM by treating $B_2$ as opaque, and then refining it by replacing $B_2$ with its own FSM. The full construction process is given in \cref{app:details}. The resulting FSM is provided for the case of NUTS (\cref{fig:fsm_transition}, bottom-right panel), which---in its iterative form---uses the outer while loop to determine whether to keep expanding a Hamiltonian trajectory, and the inner while loop to determine whether to keep integrating along the current trajectory.

\begin{figure*}[t]
\makebox[\textwidth][c]{
  \begin{tikzpicture}[mybraces,>=stealth]
    \begin{scope}[xshift=1cm,yshift=0cm]]
        \node[circle,thick,draw] (s1) at (-1,0) {$S_1$};
        \node at ($(s1)+(0,.7)$) {\footnotesize\sc init};
        \node[circle,thick,draw] (s2) at (1,0) {$S_2$};
        \node at ($(s2)+(0,.7)$) {\footnotesize\sc propose};
        \node[circle,thick,draw] (s3) at (3,0) {$S_3$};
        \node at ($(s3)+(0,.7)$) {\footnotesize\sc done};
        \draw[->,thick] (s1.east)--(s2.west);
        \draw[->,thick] (s2.east)--(s3.west);
        \draw[<-,thick] (s2.south west) .. controls (0.5,-1) and (1.5,-1)
      .. (s2.south east);
        \draw[->,thick] (s1.south east) .. controls (0,-1.25) and (2,-1.25)
      .. (s3.south west);
        \node at (1,1.3) {\small\bf Delayed Rejection MH};
    \end{scope}
    \begin{scope}[xshift=0cm,yshift=-2cm]]
      \node[circle,thick,draw] (s1) at (0,0) {$S_1$};
      \node[circle,thick,draw] (s2) at (2,0) {$S_2$};
      \node[circle,thick,draw] (s3) at (4,0) {$S_3$};
      \draw[->,thick] (s1.east)--(s2.west);
      \draw[->,thick] (s2.east)--(s3.west);
      \draw[->,thick] (s1.south east) .. controls (1,-1.25) and (3,-1.25)
      .. (s3.south west);
      \draw[<-,thick] (s2.south west) .. controls (1.5,-1) and (2.5,-1)
      .. (s2.south east);
      \node at ($(s1)+(0,.7)$) {\footnotesize\sc init};
      \node at ($(s2)+(0,.7)$) {\footnotesize\sc shrink};
      \node at ($(s3)+(0,.7)$) {\footnotesize\sc done};
      \node at (2,-1.7) {\small\bf Elliptical Slice Sampling};
    \end{scope}
        \begin{scope}[xshift=8cm,yshift=0cm]
      \node[circle,thick,draw] (s1) at (-2,0) {$S_1$};
      \node[circle,thick,draw] (s2) at (0,0) {$S_2$};
      \node[circle,thick,draw] (s3) at (2,0) {$S_3$};
      \node[circle,thick,draw] (s4) at (4,0) {$S_4$};
      \node[circle,thick,draw] (s5) at (6,0) {$S_5$};
      \draw[->,thick] (s1.east)--(s2.west);
      \draw[->,thick] (s2.east)--(s3.west);
      \draw[->,thick] (s3.east)--(s4.west);
      \draw[->,thick] (s4.east)--(s5.west);
      \draw[->,thick] (s1.south east) .. controls (-1,-1.15) and (1,-1.15)
      .. ($(s3.south west)+(-.07,.1)$);
      \draw[->,thick] ($(s3.south east)+(.07,.1)$) .. controls (3,-1.15) and (5,-1.15)
      .. (s5.south west);
      \draw[<-,thick] (s2.south west) .. controls (-0.5,-1) and (0.5,-1)
      .. (s2.south east);
      \draw[<-,thick] (s4.south west) .. controls (3.5,-1) and (4.5,-1)
      .. (s4.south east);
      \node at ($(s1)+(0,.7)$) {\footnotesize\sc init e};
      \node at ($(s2)+(0,.7)$) {\footnotesize\sc expand};
      \node at ($(s3)+(0,.7)$) {\footnotesize\sc init s};
      \node at ($(s4)+(0,.7)$) {\footnotesize\sc shrink};
      \node at ($(s5)+(0,.7)$) {\footnotesize\sc done};
      \node at (2,1.3) {\small\bf Slice Sampling};
    \end{scope}

    \begin{scope}[xshift=8cm,yshift=-2cm]
      \node[circle,thick,draw] (s1) at (-2,0) {$S_1$};
      \node[circle,thick,draw] (s2) at (0,0) {$S_2$};
      \node[circle,thick,draw] (s3) at (2,0) {$S_3$};
      \node[circle,thick,draw] (s4) at (4,0) {$S_4$};
      \node[circle,thick,draw] (s5) at (6,0) {$S_5$};
      \draw[->,thick] (s1.east)--(s2.west);
      \draw[->,thick] (s2.east)--(s3.west);
      \draw[->,thick] (s3.east)--(s4.west);
      \draw[->,thick] (s4.east)--(s5.west);
      \draw[<-,thick] (s2.south east) .. controls (1,-1.25) and (3,-1.25)
      .. (s4.south west);
      \draw[<-,thick] (s3.south west) .. controls (1.5,-1) and (2.5,-1)
      .. (s3.south east);
       \draw[->,thick] (s1.south east) .. controls (1.5,-1.5) and (2.5,-1.5)
      .. (s5.south west);
      \node at ($(s1)+(0,.7)$) {\footnotesize\sc init};
      \node at ($(s2)+(0,.7)$) {\footnotesize\sc double};
      \node at ($(s3)+(0,.7)$) {\footnotesize\sc integrate};
      \node at ($(s4)+(0,.7)$) {\footnotesize\sc check};
      \node at ($(s5)+(0,.7)$) {\footnotesize\sc done};
      \node at (2,-1.7) {\small\bf HMC-NUTS};
    \end{scope}
  \end{tikzpicture}
  }
  \vspace{-20pt }

    \caption{
      Finite state machines for the {\tt sample} function of different MCMC algorithms: The symmetric delayed-rejection Metropolis-Hastings algorithm \citep{mira2001metropolis}, elliptical slice sampling \citep{murray2010elliptical}, (vanilla) slice sampling (with single slice expansion loop) \citep{neal2003slice}, the No-U-Turn sampler for Hamiltonian Monte Carlo \citep{hoffman2014no}.}
    \label{fig:fsm_transition}
\end{figure*}

\subsection{Defining the FSM Runtime}
Going forward, for convenience we assume the transition function $\delta$ takes in and returns the \emph{label} $k$ of each block $S_k$, rather than $S_k$ itself. Now, given a constructed FSM, in \cref{alg:fsm_step} we define a function \verb|step|, which when executed performs a single transition along an edge in the FSM graph. For reasons made clear shortly, we augment \verb|step| to return a flag indicating when the final block is run.

\begin{algorithm} 
\caption{\texttt{step} function for FSM}
\label{alg:fsm_step}
\textbf{Inputs:} algorithm state $k$, variables $z$
\begin{algorithmic}[1]
  \STATE set $\verb|isSample|\gets 1\{k=K\}$
  \STATE $z \gets$ \verb|switch|($k$, $[\{\text{run } S_1(z)\},...,\{\text{run } S_K(z)\}]$)
  \STATE $k \gets$ \verb|switch|($k$, $[\{\text{run } \delta(1,z)\},...,\{\text{run }  \delta(K,z)\}]$)
  \STATE \textbf{return} $(k,z,\verb|isSample|)$
\end{algorithmic}
\end{algorithm}
Here $\verb|switch|$ uses the value of $k$ to determine which branch to run. If we start from some input $(k,z) = (0,z_0)$ and call \verb|step| repeatedly, we transition through a sequence of blocks of \verb|sample|, until we eventually reach the terminal state. At this point, a sample is obtained (as indicated by \verb|isSample=True|). We can use this function to draw $n$ samples, by (i) adding a transition from the terminal state $\mc F = S_K$ back to the initial state $\mc B = S_0$, and (ii) defining a wrapper function which iteratively calls \verb|step| until \verb|isSample=True| is obtained $n$-times (see \cref{alg:fsmmcmc}).

\begin{algorithm}
\caption{FSM MCMC algorithm}
  \label{alg:fsmmcmc}
  
\begin{algorithmic}[1]
\STATE \textbf{input:} initial value $\bm x_0$, \# samples $n$
\STATE \textbf{initialize:} $z = \text{init}(\bm x_0), X = \text{list()}, B = \text{list()}$
\STATE Set $t=0$ and $k=0$
\WHILE{$t < n$}
\STATE $(k,z,\verb|isSample|) \leftarrow \verb|step|(k, z)$
    \STATE append current sample value $\bm x$ stored in $z$ to $X$
    \STATE append \verb|isSample| to $B$
    \STATE update sample counter $t \gets t+\verb|isSample|$
\ENDWHILE
\STATE \textbf{return} $X[B]$
\end{algorithmic}
\end{algorithm}

Both \cref{alg:fsmmcmc} and \cref{alg:standardmcmc}  call \verb|sample| $n$-times and can be easily vectorized with \verb|vmap|. For \cref{alg:fsmmcmc}, we just call \verb|vmap| on \verb|step| and modify the outer loop to terminate when all chains have collected $n$ samples each (see \cref{alg:fsmmcmc_vec} in \cref{app:details}). The crucial difference is: (i) by defining a `step' to be agnostic to which block each chain is currently on, \cref{alg:fsmmcmc} enables chains to simultaneously progress their independent block sequences, and (ii) by moving all while loops to the outer layer, \cref{alg:fsmmcmc} only requires chains to synchronize after $n$ samples.

\section{Time Complexity Analysis of FSM-MCMC}
\label{sec:analysis}
One limitation with our FSM design is the \verb|step| function relies on a switch to determine which block to run. When vectorized with \verb|vmap| or an equivalent transformation, all branches in \verb|switch| are evaluated for all chains, with irrelevant results discarded. This means the cost of a single call of \verb|step| is the cost of running all state functions $S_1,...,S_K$. To obtain a speedup, the FSM must therefore sufficiently decrease the expected number of steps to obtain $n$ samples from each chain. In this section we quantitatively derive conditions under which this occurs in the simplified setting of an MCMC algorithm with a single while loop. This enables us to subsequently optimize the FSM design in \cref{sec:opt}. 

 \subsection{Long Run Costs of FSM and Non-FSM MCMC}

To this end, consider a \verb|sample| function with a single while loop. Suppose it is broken down into $K$ blocks $\{S_1,...S_K\}$, by our FSM construction procedure, where $S_k$ (for some $k \in [K]$) executes the body of the loop. Note our procedure yields $K\leq 3$ for one while loop, but we relax this here to analyze the effect of the number of blocks on performance. A single call of \verb|vmap(sample)| executes $S_l$ once if $l \neq k$, and $\max_j N_{j}$-times if $l=k$, where $N_j =$ \# loop iterations for chain $j$. We assume the cost of executing each block $S_l$ with \verb|vmap| is $c_{l}(m)$ for $m$ chains (here the dependence on $m$ reflects underlying GPU vectorization efficiency). The average cost per sample of \cref{alg:standardmcmc} after $n$ samples, when vectorized for $m$ chains, is then
\begin{align}
    C_0(m,n) & \; := \; A_0(m) + B_0(m) \frac{1}{n}\sum_{i=1}^n \max_{j \in [m]}\, N_{i,j}
\end{align}
where $A_0(m) := \sum_{j \neq k}c_j(m)$, $B_0(m) = c_{k}(m)$, and $N_{i,j}$ = \# calls of the loop body $S_k$ for chain $j$ and sample $i$. The $\max$ reflects the synchronization barrier across chains.

In this case, for the FSM (\verb|vmap|'ed \cref{alg:fsmmcmc}), the cost of calling each block $S_l$ is $\sum_{j=1}^K c_j(m)$, since \verb|vmap(step)| executes all \verb|switch| branches for all chains. We will assume this cost can be scaled by some $\alpha  \in [\max_{k \in [K]}c_k(m)/ \sum_{k=1}^K c_k(m),1]$, since we will later redesign \verb|step| to reduce $\alpha< 1$. This gives an average cost per sample for $m$ chains after $n$ samples of,
\begin{align}
    C_F(m,n) &\;  = \; A_F(m) + B_F(m) \max_{j \in [m]}\,\frac{1}{n}\sum_{i=1}^n N_{i,j}
\end{align}
where $A_F(m) = \alpha (K-1)(c_{\neg k}(m) + c_k(m))$, $B_F(m) = \alpha(c_k(m) + c_{\neg k}(m))$, and $c_{\neg k}(m) := \sum_{j \neq k}c_j(m)$.

Let $\bm X_{i,j}$ be the $i^{th}$ random sample obtained by chain $j$. If the joint sequence $(\bm X_{i,j}, N_{i,j})_{i\geq 1}$ is an appropriate Markov chain, we have the following concentration result for $C_0,C_F$, which formalizes our derivations in \cref{sec:problem}. 

\begin{theorem}[Long Run Costs]\label{thm:concentration}
    Let $N_{i,j} \in [0,B], \bm X_{i,j} \in \mc X \subseteq \bb R^d$, $(\bm X_{i,j},N_{i,j})_{i \geq 1}$ be a Markov Chain with stationary distribution $\pi$ with absolute spectral gap $1-\lambda  \in [0,1)$. Then with probability $1-\delta$ we have the inequalities:
    \begin{align}
         |C_0(m,n) - C_0(m)|  \; & \leq \; MB_0(m)n^{-\frac{1}{2}}\ln(2/\delta) \\
         |C_F(m,n) - C_F(m)| \; &\leq \; MB_F(m)n^{-\frac{1}{2}}\ln(2m/\delta)
    \end{align}
    Where, for some $(N_1,..,N_m) \overset{iid}{\sim} \bb \pi_N$, the long run costs are
    \begin{align}C_0(m) & := A_0(m) + B_0(m) \mathbb E_\pi \max_{j \in [m]}N_{j} \\
                C_F(m) & := A_F(m) + B_F(m)\mathbb E_\pi N_1
    \end{align}
    \end{theorem} 

The result says that as ${n \to \infty}$, the cost per sample of the standard MCMC design converges to the expected time for the \emph{slowest} chain to draw a sample, whilst the FSM design converges to the expected time for a \emph{single} chain to draw a sample, scaled by the additional cost of control-flow in \verb|step| (i.e. $A_F(m)$ and $B_F(m)$). Both rates are $\mathcal O(n^{-\frac{1}{2}})$. 

 \subsection{The Relative Long-Run Cost of the FSM}
 Using the form of the constants in \cref{thm:concentration}, the ratio of the long-run costs, $E(m) := C_0(m)/C_F(m)$, is given by
\begin{align}
    E(m) \; = \; \frac{c_{\neg k}(m) + c_k(m) \bb E \max_{j \in [m]}N_j}{\alpha(c_{\neg k}(m) + c_k(m))( K-1 + \bb E N_1)} \label{eq:E(m)}
\end{align}
In \cref{prop:eff_bound} in \cref{app:math} we prove that:
\begin{align}
    E(m) \; \leq \; R(m) \; := \; \frac{\bb E \max_{j \in [m]}N_j}{\bb E N_{1}} 
\end{align}
and that this bound is tight. We refer to $R(m)$ as the `theoretical efficiency bound' for the FSM. Note from \cref{eq:E(m)} that minimizing $\alpha$ and $K$ improves the efficiency $E(m)$ of the FSM. In \cref{sec:opt} we introduce two techniques to minimize $\alpha$ and $K$ in practice, which enable us to nearly obtain the efficiency bound $R(m)$ for certain MCMC algorithms. 

\textbf{Scale of Potential Speed-ups.} The size of $R(m)$ depends (only) on the underlying distribution of $N_{1}$ (since $N_i =_dN_j$ $\forall i,j \in [m]$). Whenever $N_1$ is sub-exponential, it is known that $R(m) = \mc O(\ln(m))$ \citep{vershynin2018high}. Although this implies a slow rate of increase in $m$, $R(m)$ can be still be very large for small values of $m$. For example, if $N_{j}/B \sim Bern(p)$ (i.e., one either needs zero or $B$ iterations to get a sample), then $R(m) = (1-(1-p)^m)/p$ and converges to $1/p$ exponentially fast as $m$ increases. For small $p$ this can be very large even for small $m$. In general $R(m)$ is sensitive to the skewness of the distribution: distributions on $[0,B]$ with zero skewness have $R(m) = 2$, whilst distributions with skew of 10 can have $R(m) \approx 100$; see \cref{fig:profiles}. Intuitively, these are the distributions where chains are  slow only occasionally, but at least one chain is slow often. In such cases, our FSM-design can lead to enormous efficiency gains, as we show in experiments.

\begin{figure}
    \centering
    \includegraphics[width=\linewidth]{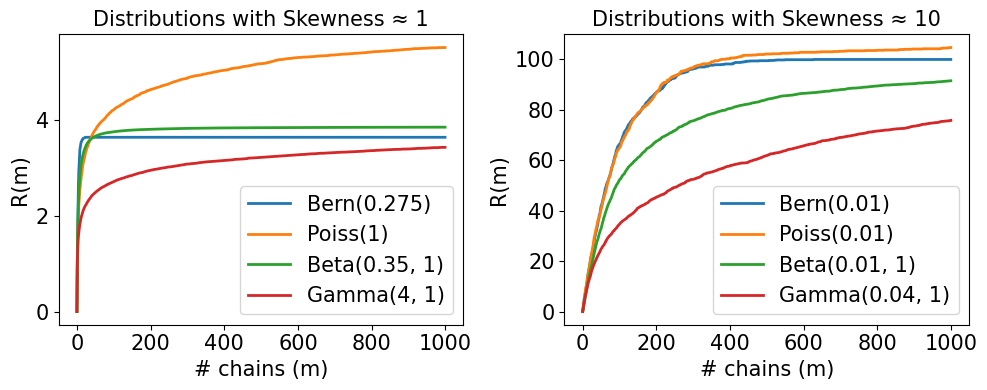}
    \vspace{-10pt}
    \caption{$R(m) = \frac{\mathbb E \max_{j \in [m]}N_{j}}{\mathbb E M_1}$ for different distributions with skewness $\gamma_1\approx 1$ (LHS) and $\gamma_1 \approx 10$ (RHS).}
    \label{fig:profiles}
    \vspace{-10pt}
\end{figure}

\section{Optimal FSM design for MCMC algorithms}\label{sec:opt}

In this section we provide two strategies to modify the function \verb|step| to (effectively) reduce $\alpha$ and $K$. These strategies enable us to develop MCMC implementations which nearly obtain the theoretical bound $R(m)$ in some experiments.

\subsection{Step Bundling to Reduce $K$}
Given an FSM with \verb|step| defined by code blocks $S_1,...,S_{K}$ and transition function $\delta$, one can `bundle' multiple steps together using a modified step function, \verb|bundled_step|, which replaces the switch over the algorithm state $k$ in \verb|step| with a series of separate conditional statements. This is shown in \cref{alg:fsm_step_bundled} for an example with two state functions. Note that under the ``run all branches and mask'' behaviour of \verb|vmap|, \verb|vmap(step)| and \verb|vmap(bundled_step)| have the same cost. However, whenever \verb|bundled_step| runs $S_1$ and $\delta$ returns $k=2$, it immediately also runs $S_2$. This reduces the `effective' number of states $K$ and/or the overall number of steps needed to recover a sample, increasing efficiency.
In principle, the block ordering can be optimized for sequences that are expected to occur with higher probability. However, we found chronological order a surprisingly effective heuristic.

\vspace{-5pt}
\begin{algorithm} 
\caption{\texttt{bundled\_step} for FSM with $S_1, S_2$}
\label{alg:fsm_step_bundled}
\textbf{Inputs:} algorithm state $k$, variables $z$
\begin{algorithmic}[1]
  \STATE set $\verb|isSample|\gets 1\{k=2\}$
  \IF{$k=1$}
  \STATE run block $S_1$ with local variables $z$
  \STATE update state $k\gets\delta(1,z)$
  \ENDIF
 \IF{$k=2$}
    \STATE run block $S_2$ with local variables $z$
  \STATE update state $k\gets\delta(2,z)$
  \ENDIF
  \STATE \textbf{return} $(k,z,\verb|isSample|)$
\end{algorithmic}
\end{algorithm}
\vspace{-10pt}
\subsection{Cost Amortization to Reduce $\alpha$}
If a function $g$ is called on a variable ${\theta \in z}$ inside $M \leq K$ state functions, a single call of \verb|vmap(step)| (or \verb|vmap(bundled_step)|) will execute $g$ $M$-times. To prevent this, we (i) augment \verb|step| to return a flag \verb|doComputation| indicating if $g$ executed in the next code block, and (ii) define a new step function $\verb|amortized_step|$ around \verb|step| which calls \verb|step| once, and executes $g$ if \verb|doComputation=True|. The resulting step function is shown in \cref{alg:fsm_step_amortized} (note if $g$ is called in step $S_1$, we assume it is already pre-computed in $z$). 

Note the function $g$ is now only called once per step when using \verb|vmap| on \verb|amortized_step|. In particular, if the executions of \(g\) cost (in total) \(\beta \in [0,1]\) fraction of the total cost of all blocks, amortization will reduce this fraction to
\(
\frac{\beta}{M} + (1 - \beta)
\)
We indeed observe this when amortizing $g = \log p$ for the elliptical slice sampler in Section 7.1. In that scenario,  \(M = 2\) and \(\beta \approx 1\) as the log-likelihood is very expensive, resulting in a \(2\times\) speed-up over no amortization. One issue is that amortization only allows steps to be bundled which do not require calls of $g$. We adapt the algorithm to handle this in \cref{app:details}. 
\vspace{-5pt}
\begin{algorithm}
\caption{\texttt{amortized\_step} for FSM with function $g$}
\label{alg:fsm_step_amortized}
\begin{algorithmic}[1]
\STATE \textbf{Input:} Algorithm state $k$, variables $z$
\STATE $(k,z, \texttt{isSample}, \texttt{doComputation})$$ \gets \texttt{step}(k,z)$
\IF{\texttt{doComputation}}
\STATE Unpack state $(z',\theta) = z$
\STATE Do computation $\theta  \gets g(z)$
\STATE Re-pack state $z \gets (z',\theta)$
\ENDIF 
\STATE \textbf{Return} $(k,z, \texttt{isSample})$
\end{algorithmic}
\end{algorithm}
\vspace{-15pt}

\section{Related work}

\textbf{FSMs in Machine Learning.}
 Previous work in machine learning has used the framework of FSMs to design image-based neural networks \citep{ardakani2020training} and Bayesian nonparametric time series models \citep{ruiz2018infinite}, as well as extract representations from Recurrent-Neural-Networks (RNNs) \citep{muvskardin2022learning, cechin2003state, tivno1998finite,zeng1993learning,koul2018learning,svete2023recurrent}. To our knowledge, our work is the first that uses FSMs as a framework to represent MCMC algorithms, and design novel implementations. 

 \textbf{Efficient MCMC on Modern Hardware.} Given the recursive nature of HMC-NUTS, previous work has reformulated the algorithm for compatibility with machine learning frameworks that cannot naively support recursion \citep{abadi2016tensorflow, phan2019composable, lao2020tfp}. However, these implementations do not address synchronization inefficiencies caused by automatic vectorization tools. Our work bears some similarities with a general-purpose algorithm proposed in the High-Performance-Computing literature for executing batched recursive programs \citep{radul2020automatically}. Both their method and ours breaks programs down into smaller blocks for efficient vectorization, but there are several major differences. Our approach provides a recipe for implementing single-chain iterative MCMC algorithms and is fully compatible with automatic vectorization tools like \verb|vmap|. In contrast, their algorithm is designed for recursive programs and requires code which is already batched. Crucially, to avoid synchronization barriers due to while loops, this code cannot have been batched with a \verb|vmap|-equivalent vectorization tool, since \verb|vmap| converts while loops into a single batched primitive that cannot be broken down by their algorithm. Additionally, our method uses the fewest (while-loop-free) blocks possible and allows for control flow within blocks (which we showed is crucial for optimal performance under the ``run and mask'' paradigm). By contrast, their algorithm does not allow blocks to contain any control-flow, yielding many small blocks. We also obtain provable speed-ups for MCMC via theoretical analysis.

\section{Experiments} \label{sec:experiments}

\begin{figure*}[t]
   \includegraphics[width = \textwidth]{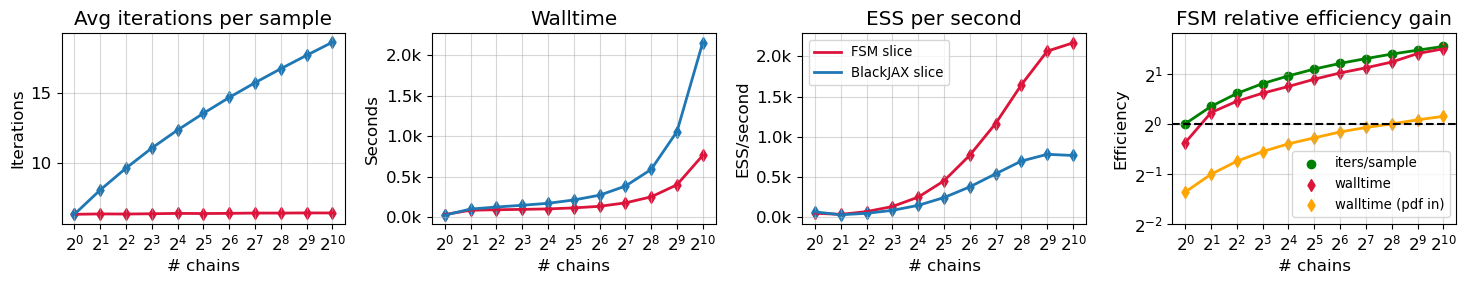} 
\vspace{-20pt}

   \caption{Average results using 10 random seeds (standard deviations too small to show) from drawing 10k posterior samples for the covariance hyperparameters $(\tau, \theta, \sigma)$ of a Gaussian Process $Y(x)= f(x)+ \epsilon$ on the Real Estate UCI dataset ($n=411, d=6$). Blue =  BlackJAX elliptical slice, Red = FSM elliptical slice sampler (red).  LHS: the average number of sub-iterations (i.e., ellipse contractions) needed to draw a single sample increases from 6 (1 chain) to 18 (1024 chains) for the standard implementation due to synchronization barriers, but remains constant for our FSM. Middle two plots: The FSM can run $\sim$3x faster by avoiding synchronization barriers, as shown by the Walltime (left-middle) and ESS/S (right-middle). RHS: the ratio of average iterations per sample (i.e. $R(m)$) (green) bounds the obtainable `efficiency gain' using our FSM, but is nearly obtained in relative walltime when amortizing log-pdf calls (red).}
   \label{fig:GPR_slice}
\end{figure*}

\begin{figure}[t]
    \centering
    \includegraphics[width=\linewidth]{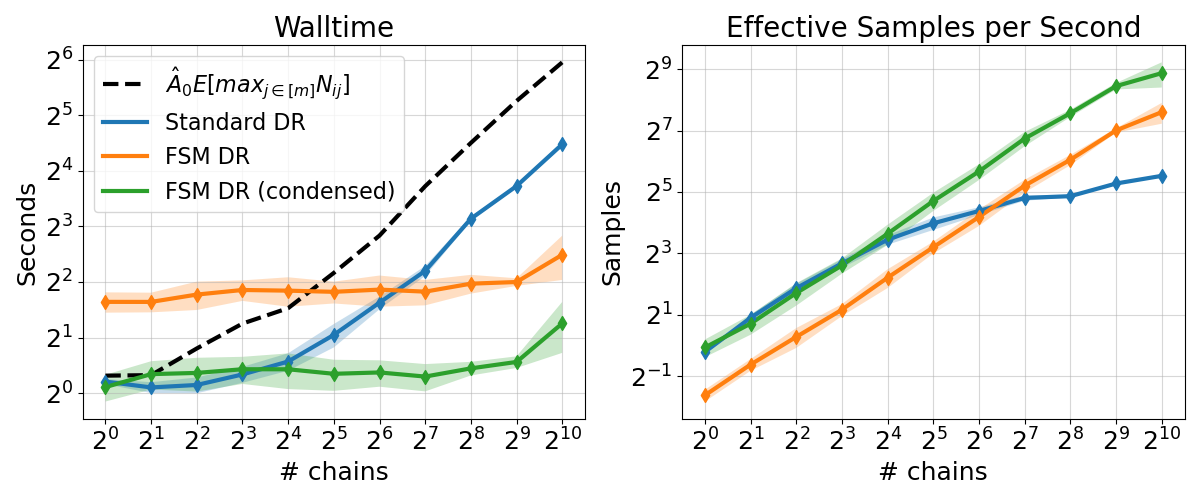}

    \vspace{-10pt}
    
    \caption{Mean and standard deviation walltimes (LHS) and ESS (RHS) of the symmetric Delayed-Rejection Algorithm \citep{mira2001metropolis} using \cref{alg:standardmcmc} and our FSM implementation \cref{alg:fsmmcmc} (both with \texttt{vmap}), on a univariate Gaussian (10 random seeds). FSM uses \texttt{step} with the while loop body (PROPOSE in \cref{fig:fsm_transition}) split into $4$ states, whilst FSM-condensed uses a single step. Using a single step leads to a $\sim3$x efficiency gain.}
    \label{fig:DR}
 \vspace{-15pt}
\end{figure}

The following experiments evaluate FSMs for different MCMC algorithms with while loops against their standard (non-FSM) implementations, as well as other MCMC methods in one experiment. All methods consist of single chain MCMC algorithms written in JAX, turned into multi-chain methods with \verb|vmap|, and compiled using \verb|jit|. All experiments are run in JAX on an NVIDIA A100 GPU with 32GB CPU memory. Code can be found at \url{https://github.com/hwdance/jax-fsm-mcmc}.

\subsection{Delayed-Rejection MH on a Simple Gaussian}

We first illustrate basic properties of the FSM conversion on a toy example. The MCMC algorithm used is symmetric Delayed-Rejection Metropolis Hastings (DRMH) \citep{mira2001metropolis},
which is a simple example of a delayed rejection method. Symmetric DRMH modifies the Random-Walk Metropolis-Hastings algorithm by iteratively re-centering the proposal distribution on the rejected sample and resampling until either acceptance occurs or a maximum number of tries $M$ is reached. To ensure detailed balance, the acceptance probability is adjusted to account for past rejections.

{\bf Experimental setup}.
We implement symmetric DRMH using (\verb|vmap|'ed) \cref{alg:standardmcmc} (baseline) and \cref{alg:fsmmcmc} (ours) to sample from a univariate Gaussian $\mc N(0,1)$, varying the number of chains. We use a $\mc N(x,0.1)$ proposal distribution with $M=100$ tries per sample and draw 10,000 samples per chain. Although DRMH has 3 state functions by default (see \cref{fig:fsm_transition}), the INIT and DONE states are essentially empty, and so just a single state function can be used for the while loop body. This means an appropriate FSM implementation should be able to get close to $R(m)$, up to overheads. To illustrate the importance of designing FSMs with as few (effective) state functions as possible, we implement an FSM which unrolls the while loop body into four different state functions, as well as a (condensed) FSM with a single state function (i.e. effectively uses bundling).  

{\bf Results}.
As the number of chains $m$ increases, the FSM implementations increasingly outperform the standard implementation (see \cref{fig:DR}). This reflects the increasing synchronization cost of waiting for the slowest chain. The speedups are near an order of magnitude when $m=1024$ for the condensed FSM. Bundling sees nearly a 3x efficiency gain and enables no performance loss when $m=1$ and there is no synchronization barrier. Note Standard DR tracks the profile of the estimated $\bb E[ \max_{j \in [m]} N_{1,j}]$, whilst the FSMs track the profile of $\bb E[N_{1,j}]$ (which is flat). This implies the condensed FSM (i.e. with step bundling) has been able to approximately obtain $R(m)$ up to a constant factor.

\subsection{Elliptical Slice Sampling on Real Estate Data}

\begin{figure*}[t]
    \centering
    \includegraphics[width=\linewidth]{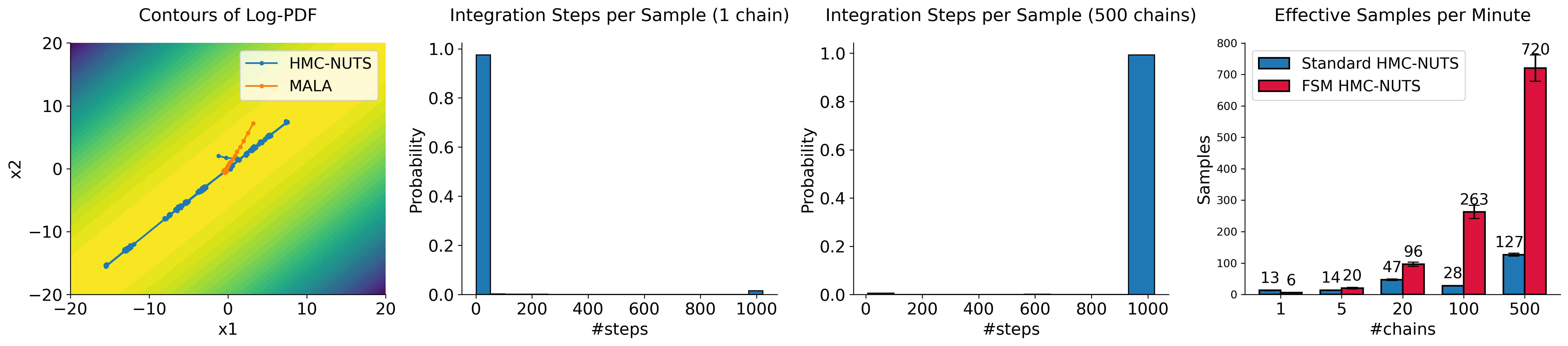}
    \vspace{-20pt}

    \caption{Left: contours of the first two dimensions of a correlated mixture of Gaussians, along with a single chain of HMC-NUTS and MALA. Middle Left: Histogram of the number of integration steps taken per sample for a single NUTS chain. Middle Right: histogram of the maximum number of integration steps taken per sample across 500 chains. Right: Effective samples per minute for the standard BlackJAX HMC-NUTS implementation, and our FSM implementation of HMC-NUTS (average and standard error bars from 5 seeds). The FSM achieves speed ups of nearly an order of magnitude for 100 chains, and more than half an order of magnitude for 500 chains. }
    \label{fig:nuts}
\end{figure*}

The elliptical slice sampler (introduced in the example in \cref{sec:problem}) has a single while loop, resulting in three state functions (see \cref{fig:fsm_transition}). We compare BlackJAX's implementation to our FSM implementation. 

{\bf Experimental setup}.
We apply the sampler to infer posteriors on covariance hyperparameters in Gaussian Process Regression, using the UCI repository Real Estate Valuation dataset \citep{real_estate_valuation_477}.
This dataset is comprised of $n=414$ input and output pairs $\mc D_n = \{\bm x_i,y_i\}_{i=1}^n$, where $y_i$ is the house price of area $i$, and $\bm x_i \in \bb R^6$ are house price predictors including house age, spatial co-ordinates, and number of nearby convenience stores. We model $y = f(x) + \epsilon$ and assume $\epsilon \sim \mc N(0, \sigma^2)$, $f \sim \mc{GP}(0,k)$ with kernel $k(x,x') = \tau^2\exp(-\lambda^2 \lvert x - x' \rVert^2)$. We use Normal priors $\sigma, \tau, \lambda \sim \mc N(0,1)$, (so the ellipse is drawn using $\mc N(0,I)$) and use the sampler to draw $10k$ samples per chain from the posterior $p(\sigma,\tau,\lambda|\mc D_n)$, for varying $\#$ chains $m$.

{\bf Results} are shown in \cref{fig:GPR_slice}. As expected, the BlackJAX implementation suffers from synchronization barriers at every iteration due to using \verb|vmap| with while loops: its average number of iterations per sample increases roughly logarithmically from $6$ (1 chain) to $18$ (1024 chains), whereas the FSM implementation remains constant. As a result, the FSM significantly improves walltime and ESS/second performance (\cref{fig:GPR_slice}/middle). For instance, when 1024 chains are used, the FSM reduces the
time to draw 10k samples per chain from over half an hour to about 10 minutes.  The efficiency gain can be measured by the ratio (BlackJAX/FSM) of wall-times (shown in \cref{fig:GPR_slice}/right). As expected, FSM efficiency increases with the number of chains. The greatest efficiency gain occurs precisely where the best ESS/second can be obtained via GPU parallelism. The analysis in \cref{sec:analysis} shows that the efficiency gain is upper-bounded by the ratio of average \#  iterations per sample for both methods (i.e., $R(m)$). This bound is almost achieved here by amortizing log-pdf calls. This is because (i) roughly 80\% of the time is spent in the iterative `SHRINK' state, and (ii) log-pdf calls dominate computational cost here, so amortizing them ensures the state function cost is similar to the standard implementation. Since the log-pdf is needed in two states, not amortizing it results in two log-pdf calls per step, and so we lose roughly a factor 2 in relative performance (orange line in \cref{fig:GPR_slice}). These results change with data set size, which determines the cost of log-pdf calls (see \cref{app:details}).

\vspace{-1pt}

\subsection{HMC-NUTS on High-Dimensional Correlated MoN}

The NUTS variant of Hamiltonian Monte Carlo \citep{hoffman2014no}
adaptively chooses how many steps of Hamiltonian dynamics to simulate when drawing a sample, by checking whether the trajectory has turned back on itself or has diverged due to numerical error. Its iterative implementation in BlackJAX involves two nested while loops: An outer loop that expands the proposal trajectory, and an inner loop that monitors for U-turns and divergence. Converting these while loops into an FSM using our procedure results in five states (\cref{fig:fsm_transition}). We again compare BlackJAX's implementation to our own (using \verb|vmap| for both methods). 

{\noindent\bf Experimental setup}.
We implement NUTS on a 100-dimensional correlated mixture of Gaussians ($\rho = 0.99$), with the mixture modes placed along the principal direction at $(-10 \cdot \bm 1, \bm 0, 10 \cdot \bm 1)$. We use a pre-tuned step-size with acceptance rate $\sim 0.85$ and identity mass matrix. We draw $n=1000$ samples per chain and vary $\#$ chains $m$. 

{\noindent\bf Results}.
The trajectory of a single NUTS chain (1000 samples) are displayed on the LHS of \cref{fig:nuts} (one dot = one sample). The typical distance traveled by NUTS is small (few integration steps), with the occasional large jump (many integration steps) when the momentum sample aligns with the principal direction. In particular, the probability a sample requires less than $20$ integration steps is $\sim 0.95$ and needs $>1000$ steps is $\sim 0.01$, but the probability that \emph{at least one} chain needs more than 1000 steps is $\sim 0.99$ (Middle \cref{fig:nuts}). This results in FSM speed-ups of nearly an order of magnitude for $m=100$, and
about half an order of magnitude for $m=500$ (RHS \cref{fig:nuts}). Note that one can avoid sychronization barriers and obtain very high ESS/Sec using a simpler algorithm like MALA, but this fails to explore the distribution (LHS \cref{fig:nuts}).

\subsection{Transport Elliptical Slice Sampling and NUTS on Distributions with Challenging Local Geometry}

\begin{table}[t]
\vspace{-5pt}
    \centering
    \caption{Left: ESS/sec for baselines on Predator Prey (PP), Google Stock (GS), German Credit (GC), Biochemical Oxygen Demand (BOD) distributions. Right: FSM speed-ups for NUTS/TESS.}
    \label{tab:tess}
    \setlength{\tabcolsep}{4pt}
    \small
    \resizebox{\linewidth}{!}{
    \begin{tabular}{l|cccc|cc}  
    \toprule 
  & \multicolumn{4}{c|}{ESS/sec (non‐FSM)} & \multicolumn{2}{c}{FSM Speed‐up} \\
  \cmidrule(lr){2-5} \cmidrule(lr){6-7}
 Dist.   & MEADS & CHEES & NUTS & TESS & NUTS & TESS  \\
        \midrule
        PP   & 1.5    & nan   & 0.02 & 2.3    & 1.5× & 1.7× \\
        GS   & 480.8  & 60.2  & 0.15  & 1116   & 3.5× & 2.2× \\
        GC   & 141.5  & 199.0 & 1.71  & 58.95  & 1.2× & 1.0× \\
        BOD  & 64.85  & 247.0 & 0.56  & 2978   & 0.8× & 1.1× \\
        \bottomrule
    \end{tabular}
    }
    \vspace{-17pt}
\end{table}

\emph{Transport elliptical slice sampling} (TESS), due to \citet{cabezas2023transport}, is a state-of-the-art variant of elliptical slice sampling designed for challenging local geometries. It uses a normalizing flow $T$ to `precondition' the distribution $\pi$ into an approximate Gaussian, does elliptical slice sampling on the transformed distribution $T_{\#}\pi$, and then pushes generated samples through $T^{-1}$ to recover samples from $\pi$. TESS achieves particularly good results on distributions with `funnel' geometries, on which gradient-based methods like HMC-NUTS tend to struggle  \citep{gorinova2020automatic}. TESS is similar to the elliptical slice sampler and so the FSM has the same `single loop' structure (see \cref{fig:fsm_transition}).

{\noindent\bf Experimental setup}
We examine the speed-ups using FSMs for TESS and NUTS on the four benchmark distributions in \citet{cabezas2023transport}, which are chosen for their challenging geometries. As baselines we use two state-of-the-art adaptive HMC variants (MEADS \citep{hoffman2022tuning}, and CHEES \citep{hoffman2021adaptive}). We expect NUTS to perform poorly on these problems (as observed in \citet{cabezas2023transport}), but implementing it lets us at least examine the FSM speed-ups. For all methods, we average results over 128 chains of 1000 samples, with hyperparameters pre-tuned over 400 warm-up steps.

{\bf Results} are in \cref{tab:tess}. TESS-FSM improved ESS/sec over TESS in all cases (in 2/4 cases by $\sim2\times$) and in 3/4 cases improved the best ESS/sec. NUTS performed poorly as expected, but the FSM improved ESS/sec in 3/4 cases, and $>\!3\times$ in one case. Speed-ups for NUTS and TESS were larger for tasks with an expensive log-pdf (PP, GS). This is because most time is spent in the loop states where the log-pdf is called, and a more costly log-pdf reduces the gap between the cost of executing these states for non-FSMs ($c_k$) and FSMs ($\sum_{j=1}^m c_j$) (see \cref{sec:analysis}), improving efficiency.

\clearpage

\section*{Acknowledgements}
HD, PG and PO are supported by the Gatsby Charitable Foundation.
This work was partially supported by NSF OAC 2118201.

\section*{Impact Statement} This work improves the execution efficiency of MCMC algorithms. Our method has the potential to significantly reduce time, cost, and energy expenditure for a range of scientific applications.

\bibliographystyle{apalike} 
\bibliography{references}

\newpage
\appendix
\onecolumn
\section{Mathematical Appendix}\label{app:math}

\subsection{Proof of \cref{thm:concentration}}

\begin{proof}
The result broadly follows from known Hoeffding bounds for Markov chains. For clarity we restate the relevant result below\footnote{We note that \citet{fan2021hoeffding} only present a one-sided bound, but by standard symmetry arguments this immediately implies the above two-sided bound.}, using our notation and set-up. 

We first define formally the notion of an absolute spectral gap, from \cite{fan2021hoeffding}.
\begin{definition}[Absolute Spectral Gap]
Let \(\mathcal Z\) and \(\pi\) denote the state space and the invariant measure of a Markov chain \(\{Z_i\}_{i\ge1}\). For a function \(f:\mathcal Z\to\mathbb R\), write
\(
\pi(h)\;:=\;\int h(z)\,\pi(dz)
\).
Let
\(
L_2(\pi)\;=\;\{\,h : \pi(h^2)<\infty\}
\)
be the Hilbert space consisting of \(\pi\)-square-integrable functions, and
\(
L_2^0(\pi)\;=\;\{\,h\in L_2(\pi) : \pi(h)=0\}
\)
be its subspace of \(\pi\)\nobreakdash-mean\-zero functions. The transition probability kernel of the Markov chain, denoted by \(P\), is viewed as an operator acting on \(L_2(\pi)\). Let \(\lambda\in[0,1]\) be the operator norm of \(P\) acting on \(L_2^0(\pi)\). Then, \(1-\lambda\) is the \emph{absolute spectral gap} of the Markov chain.
\end{definition}

\begin{proposition}[Hoeffding Bound for Markov Chains - Theorem 1 in \citet{fan2021hoeffding}]\label{prop:hoeffding_markov}
    Let $(Z_i)_{i \geq 1}$ be a Markov Chain with measurable state space $\mc Z$, stationary distribution $\pi$, and absolute spectral gap $1-\lambda \in (0,1]$. Let $f: \mathcal Z \to [0,B]$ be measurable and bounded. Then, for any $\epsilon > 0$,
    \begin{align}
        \bb P_\pi\left(\left |\sum_{i=1}^n f(Z_i) - n\bb E_\pi f(Z_1) \right |\geq \epsilon\right) \leq 2  \exp \left(-\frac{1-\lambda}{1+\lambda}\frac{2\epsilon^2}{B^2} \right)
    \end{align}
    where $1-\lambda$ is the absolute spectral gap of $\pi$. 
\end{proposition}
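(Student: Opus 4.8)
The plan is to prove this as an operator-theoretic (spectral) concentration bound, via the Chernoff-plus-spectral-perturbation route standard for reversible chains and extended to the general case through the additive reversibilization. By the symmetry remark in the footnote it suffices to establish the one-sided tail $\bb P_\pi(\sum_{i=1}^n f(Z_i) - n\bb E_\pi f \geq \epsilon) \leq \exp(-\frac{1-\lambda}{1+\lambda}\frac{2\epsilon^2}{nB^2})$; applying the same estimate to $-f$ and summing the two tails yields the factor $2$. I first center, setting $g := f - \bb E_\pi f$, so $\bb E_\pi g = 0$ and $g$ still ranges in an interval of length $B$, reducing the problem to a tail bound on $S_n := \sum_{i=1}^n g(Z_i)$.

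Next I apply the exponential Markov inequality: for any $\theta > 0$, $\bb P_\pi(S_n \geq \epsilon) \leq e^{-\theta\epsilon}\bb E_\pi[e^{\theta S_n}]$, so everything reduces to bounding the moment generating function. I would represent it on $L^2(\pi)$: writing $P$ for the transition operator and $M_h$ for multiplication by $h$, stationarity gives the exact identity $\bb E_\pi[e^{\theta S_n}] = \langle \mathbf 1, M_{e^{\theta g}}(P M_{e^{\theta g}})^{n-1}\mathbf 1\rangle_\pi$. A Cauchy--Schwarz / symmetrization step then bounds this iterated product by $\|T_\theta\|^n$ (up to a boundary factor), where $T_\theta := M_{e^{\theta g/2}}\tilde P M_{e^{\theta g/2}}$ is self-adjoint and $\tilde P := (P+P^*)/2$ is the additive reversibilization (so $\tilde P = P$ when the chain is reversible). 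Crucially, on real functions the antisymmetric part of $P$ contributes nothing to the quadratic form, since $\langle \psi, P\psi\rangle_\pi = \langle\psi,\tilde P\psi\rangle_\pi$ for real $\psi$; this is what legitimizes working with the self-adjoint $T_\theta$ even when $P$ is not reversible. The problem now collapses to bounding the top eigenvalue $\|T_\theta\|$.

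The main obstacle, and where the amplification factor $\frac{1+\lambda}{1-\lambda}$ is born, is the spectral lemma: $\log\|T_\theta\| \leq \frac18\theta^2 B^2\frac{1+\lambda}{1-\lambda}$, where I take $\lambda := \|\tilde P|_{L^2_0(\pi)}\|$ to be the norm of $\tilde P$ on the mean-zero subspace, so that $1-\lambda$ is the spectral gap. I would estimate the Rayleigh quotient $\langle\phi,T_\theta\phi\rangle_\pi$ over unit $\phi$ by splitting $M_{e^{\theta g/2}}\phi$ into its $\pi$-mean and its mean-zero fluctuation $u$, using $e^{\theta g/2} = 1 + O(\theta B)$ together with $\bb E_\pi g = 0$ to extract a leading $\theta^2$ term from the variance of $g$, and using the gap bound $\langle u,\tilde P u\rangle_\pi \leq \lambda\|u\|^2$ to damp the off-diagonal contribution of the fluctuation. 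Tracking how this $\lambda$-contraction feeds back through the quadratic form is precisely what turns a crude $\theta^2 B^2$ estimate into one carrying $\frac{1+\lambda}{1-\lambda}$; keeping the constant sharp (and handling the boundary multiplication factors in the MGF identity) is the delicate bookkeeping I expect to be hardest.

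Finally I close with the Chernoff optimization. Combining the above gives $\bb P_\pi(S_n \geq \epsilon) \leq \exp(-\theta\epsilon + \frac18 n\theta^2 B^2\frac{1+\lambda}{1-\lambda})$; minimizing the exponent over $\theta > 0$ at $\theta^{*} = 4\epsilon(1-\lambda)/(nB^2(1+\lambda))$ produces the one-sided tail $\exp(-\frac{1-\lambda}{1+\lambda}\frac{2\epsilon^2}{nB^2})$, which exhibits the classical Hoeffding normalization $\epsilon^2/(nB^2)$ with a Markov-dependent prefactor. Adding the matching lower tail supplies the factor $2$ and gives the stated inequality; the i.i.d. case $\lambda = 0$ recovers ordinary Hoeffding as a sanity check.
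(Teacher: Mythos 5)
Your target here is a result the paper does not prove at all: Proposition~\ref{prop:hoeffding_markov} is imported verbatim as Theorem~1 of \citet{fan2021hoeffding}, with only a footnote remarking that the one-sided bound there yields the two-sided form by symmetry. So you are supplying a proof where the authors supply a citation. Your overall plan---Chernoff bound, MGF-as-operator-product identity, a spectral perturbation lemma, then optimize $\theta$---is the right family of argument (it is how such bounds are proved in the literature, including in the cited reference), and your bookkeeping is consistent: $\theta^* = 4\epsilon(1-\lambda)/(nB^2(1+\lambda))$ does give the exponent $-\tfrac{1-\lambda}{1+\lambda}\tfrac{2\epsilon^2}{nB^2}$. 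Note that you are in fact proving the \emph{corrected} statement: the proposition as printed is missing the $n$ in the denominator of the exponent (a typo---the paper's own application in the proof of Theorem~\ref{thm:concentration} sets $\delta = 2\exp\bigl(-\tfrac{1-\lambda}{1+\lambda}\tfrac{2\epsilon^2}{nB^2}\bigr)$, which matches your version and Fan et al.'s).

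There is, however, a genuine gap in your handling of non-reversibility. The step you flag as the ``legitimizing'' move---that $\langle \psi, P\psi\rangle_\pi = \langle \psi, \tilde P\psi\rangle_\pi$ for real $\psi$, hence one may replace $P$ by the additive reversibilization $\tilde P = (P+P^*)/2$ inside $T_\theta$---does not do the work you need. That identity controls only the quadratic form (numerical range), whereas the product bound $\mathbb E_\pi[e^{\theta S_n}] \leq (\text{boundary factors})\cdot \lVert M_{e^{\theta g/2}} P M_{e^{\theta g/2}}\rVert^{\,n-1}$ requires the \emph{operator norm}, and for non-self-adjoint $A$ one can have $\lVert A\rVert$ strictly larger than $\sup_{\lVert\psi\rVert=1}\langle\psi, A\psi\rangle_\pi$ (a nilpotent shift has vanishing real quadratic form but norm one). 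Concretely, $\lVert M_{e^{\theta g/2}} P M_{e^{\theta g/2}}\rVert \leq \lVert T_\theta\rVert$ fails in general: already at $\theta=0$, a deterministic cycle has $\lVert P\rVert_{L^2_0(\pi)} = 1$ while $\lVert \tilde P\rVert_{L^2_0(\pi)} < 1$. Relatedly, your $\lambda := \lVert \tilde P|_{L^2_0(\pi)}\rVert$ is in general \emph{smaller} than the $\lambda := \lVert P - \Pi\rVert_{L^2(\pi)}$ used by \citet{fan2021hoeffding} (to whose definition the proposition defers), so you would be claiming a strictly stronger theorem than the one being proved, with the unproved strengthening concentrated exactly at this symmetrization step. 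The known repairs are either (a) to work with the bilinear form directly, decomposing $P = \Pi + (P - \Pi)$ and bounding $\langle \phi, M_{e^{\theta g/2}}PM_{e^{\theta g/2}}\psi\rangle_\pi \leq \pi(e^{\theta g/2}\phi)\,\pi(e^{\theta g/2}\psi) + \lambda\lVert e^{\theta g/2}\phi\rVert\,\lVert e^{\theta g/2}\psi\rVert$, which is essentially the route of the cited paper, or (b) to pay for the multiplicative reversibilization $P^*P$ (pseudo-spectral gap), as in Lezaud/Fill/Paulin---at the cost of a different $\lambda$. Restricted to reversible chains, your sketch is the classical Le\'on--Perron argument and, granting the spectral lemma $\log\lVert T_\theta\rVert \leq \tfrac{1}{8}\theta^2 B^2 \tfrac{1+\lambda}{1-\lambda}$ that you state but do not prove, it closes correctly with the stated constants.
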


Now we are ready to prove our results. We start with $C_0(m,n)$. First, note that since $(\bm X_{i,j},N_{i,j})_{i\geq 1}$ is a Markov chain with stationary distribution $\pi$ for every chain $j$, then $(\{\bm X_{i,j},N_{i,j}\}_{j=1}^m)_{i\geq 1}$ is also a Markov chain with stationary distribution $\pi^m := \pi \otimes \pi \underbrace{\otimes...\otimes}_{m-1} \pi$, because the chains are independent. Therefore, if we set $Z_i := \{\bm X_{i,j},N_{i,j}\}_{j=1}^m$ and $f(Z_i) := \max_{j \in [m]}(N_{i,j}) \in [0,B]$, we get by an application of \cref{prop:hoeffding_markov} that 

\begin{align}
    \bb P_\pi\left(\left |\sum_{i=1}^n \max_{j \in [m]}(N_{i,j}) - n\bb E_\pi \max_{j \in [m]}(N_{1,j}) \right |\geq \epsilon\right) \leq 2  \exp \left(-\frac{1-\lambda}{1+\lambda}\frac{2\epsilon^2}{nB^2} \right)
\end{align}

Setting $\delta = 2  \exp \left(-\frac{1-\lambda}{1+\lambda}\frac{2\epsilon^2}{nB^2} \right)$ and re-arranging, we get that with probability $1-\delta$ (under the stationary distribution $\pi$),
\begin{align}
     \left|\sum_{i=1}^n \max_{j \in [m]}(N_{i,j}) - n\bb E_\pi \max_{j \in [m]}(N_{1,j}) \right | \leq B\sqrt{\frac{n(1+\lambda)}{2(1-\lambda)}\ln\left( \frac{2}{\delta}\right)}
\end{align}

Multiplying by $B_0(m)$ and dividing by $n$ on both sides, and adding and subtracting $A_0(m)$ from the LHS gives us the result for $C_0$

\begin{align}
     \left|B_0(m)\frac{1}{n}\sum_{i=1}^n \max_{j \in [m]}(N_{i,j}) - B_0(m)\bb E_\pi \max_{j \in [m]}(N_{1,j}) \right | & \leq B_0(m)B\sqrt{\frac{(1+\lambda)}{2n(1-\lambda)}\ln\left( \frac{2}{\delta}\right)} \\
      \left|B_0(m)\frac{1}{n}\sum_{i=1}^n \max_{j \in [m]}(N_{i,j}) \pm A_0(m) - B_0(m)\bb E_\pi \max_{j \in [m]}(N_{1,j}) \right | & \leq B_0(m)B\sqrt{\frac{(1+\lambda)}{2n(1-\lambda)}\ln\left( \frac{2}{\delta}\right)} \\
      \left|C_0(m,n) - A_0(m) - B_0(m)\bb E_\pi \max_{j \in [m]}(N_{1,j}) \right | & \leq B_0(m)B\sqrt{\frac{(1+\lambda)}{2n(1-\lambda)}\ln\left( \frac{2}{\delta}\right)}
\end{align}
Now we follow similar steps for $C_F(m,n)$. To start, we bound the distance from $\max_{j \in m} \frac{1}{n} \sum_{i=1}^nN_{i,j}$ and $\bb E_\pi[N_{11}]$ in terms of a sum of individual distances using the union bound.

\begin{align}
    \bb P_{\pi}\left(\left| \max_{j \in m} \frac{1}{n} \sum_{i=1}^nN_{i,j} - \bb E_\pi N_{11} \right|\geq \epsilon \right) & = \bb P_{\pi}\left( \max_{j \in m} \frac{1}{n} \sum_{i=1}^nN_{i,j} - \bb E_\pi N_{11} \geq \epsilon \right) \nonumber \\
    & \qquad + \bb P_{\pi}\left( \max_{j \in m} \frac{1}{n} \sum_{i=1}^nN_{i,j} - \bb E_\pi N_{11} \leq -\epsilon \right) \\
    & = \bb P_\pi \left(\bigcup_{j =1}^m \left\{\frac{1}{n}\sum_{i=1}^n N_{i,j} - \bb E_\pi N_{11} \geq \epsilon\right\} \right) \nonumber \\
   & \qquad  + \bb P_\pi \left(\bigcup_{j =1}^m \left\{\frac{1}{n}\sum_{i=1}^n N_{i,j} - \bb E_\pi N_{11} \leq -\epsilon\right\} \right) \\
    & \leq \sum_{j=1}^m\Bigg[ \bb P_\pi \left(\frac{1}{n}\sum_{i=1}^n N_{i,j} - \bb E_\pi N_{11} \geq \epsilon \right) \nonumber\\
   & \qquad + \bb P_\pi \left(\frac{1}{n}\sum_{i=1}^n N_{i,j} - \bb E_\pi N_{11} \leq -\epsilon\right) \Bigg] \\
    & = \sum_{j=1}^m \bb P_\pi \left(\left|\frac{1}{n}\sum_{i=1}^n N_{i,j} - \bb E_\pi N_{11} \right|\geq \epsilon \right) \\
    & = m\bb P_\pi \left(\left|\frac{1}{n}\sum_{i=1}^n N_{i,1} - \bb E_\pi N_{11} \right|\geq \epsilon \right) \\
    & = m\bb P_\pi \left(\left|\sum_{i=1}^n N_{i,1} - n\bb E_\pi N_{11} \right|\geq n\epsilon \right)
\end{align}

Applying \cref{prop:hoeffding_markov} on the Markov Chain $(\bm X_{1,i},N_{1,i})_{i \geq 1}$ with  $f(\bm X_{1,i},N_{1,i}) = N_{1,i} \in [0,B]$ and following the same steps as for $C_0(m,n)$, we similarly recover
\begin{align}
      \left|C_F(m,n) - A_F(m) - B_F(m)\bb E_\pi \max_{j \in [m]}(N_{1,j}) \right | & \leq B_F(m)B\sqrt{\frac{(1+\lambda)}{2n(1-\lambda)}\ln\left( \frac{2m}{\delta}\right)}
\end{align}
which is the result in the Theorem. 
\end{proof}

\begin{proposition}\label{prop:eff_bound}
  Fix $m,K \in \bb N \backslash\{0\}$ and let $\bb P_N$ be a probability measure on $\bb R_+$ strictly positive first moment. Suppose (i) $N_1,...,N_m \overset{iid}{\sim} \bb P_N$, (ii) $c_1(m),...,c_K(m) \geq 0$ and (iii) $\alpha \in [\max_{j \in [K]} c_j(m)/\sum_{j \in [K]} c_j(m),1]$. Then, we have   
  \begin{align}
   \text{E}(m) := \frac{c_{\neg k}(m) + c_k(m) \bb E \max_{j \in [K]}N_j}{\alpha(c_{\neg k}(m) + c_k(m))( K-1 + \bb E N_1)} \leq \frac{\bb E \max_{j \in [K]}N_j}{\bb E N_{1}} =: R(m)
   \end{align}
    where $c_{\neg k}(m) = \sum_{j \neq k}c_j(m)$. The bound is tight.

\end{proposition}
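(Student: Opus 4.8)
The plan is to reduce the claim to an elementary term-by-term comparison after first fixing the worst case over $\alpha$. Write $a := c_{\neg k}(m)$, $b := c_k(m)$, $c^\star := \max_{j\in[K]} c_j(m)$, $\mu := \bb E N_1$ and $M := \bb E \max_{j\in[K]} N_j$. We may assume $c^\star > 0$, since otherwise every cost vanishes and the ratio is undefined, and we recall $\mu>0$ by hypothesis. Note that $a+b = \sum_{j\in[K]} c_j(m)$, so the admissibility constraint (iii) reads exactly $\alpha \geq c^\star/(a+b)$.

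The first key observation is that, as a function of $\alpha$, the quantity $E(m)$ has the form $\text{(fixed numerator)}/\big(\alpha\,(a+b)(K-1+\mu)\big)$, hence is \emph{decreasing} in $\alpha$. Therefore it is maximized at the left endpoint $\alpha = c^\star/(a+b)$, where $\alpha(a+b)=c^\star$, and it suffices to establish the bound at this value, i.e. to prove
\[ \frac{a + bM}{c^\star(K-1+\mu)} \leq \frac{M}{\mu}. \]
Since all quantities are nonnegative and $c^\star,\mu,M>0$, cross-multiplying turns this into the equivalent polynomial inequality
\[ \mu a + \mu b M \;\leq\; (K-1)c^\star M + \mu\, c^\star M. \]

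To close the argument I would bound the left-hand side termwise using three elementary facts: (i) $M \geq \mu$, because $\max_{j} N_j \geq N_1$ pointwise and expectation is monotone; (ii) $b = c_k(m) \leq c^\star$, as the maximum dominates any coordinate; and (iii) $a = \sum_{j\neq k} c_j(m) \leq (K-1)c^\star$, being a sum of $K-1$ terms each at most $c^\star$. Combining (iii) with (i) gives $\mu a \leq \mu(K-1)c^\star \leq (K-1)c^\star M$, while (ii) gives $\mu b M \leq \mu\, c^\star M$; adding these two inequalities yields precisely the required bound, completing the proof of $E(m)\leq R(m)$.

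For tightness I would exhibit the single-state case $K=1$: then $c_{\neg k}(m)=0$ is an empty sum, the admissible range collapses to $\alpha = c_1(m)/c_1(m) = 1$, and $K-1=0$, so that $E(m) = c_1(m)M/\big(c_1(m)\mu\big) = M/\mu = R(m)$ for every $\bb P_N$ with $\mu>0$; this shows the constant $R(m)$ cannot be improved. The only genuinely conceptual step is recognizing that the bound must be verified at the \emph{smallest} admissible $\alpha$, which is the worst case for the inequality; after that reduction the estimates are routine, and the only probabilistic input needed is the trivial comparison $\bb E\max_j N_j \geq \bb E N_1$.
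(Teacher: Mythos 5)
Your proof is correct and rests on the same three elementary bounds as the paper's own argument ($c_k(m)\le\max_{j\in[K]} c_j(m)\le\alpha\sum_{j\in[K]} c_j(m)$, $c_{\neg k}(m)\le (K-1)\max_{j\in[K]} c_j(m)$, and $\bb E\max_j N_j\ge \bb E N_1$), so it is essentially the same proof; the only organizational difference is that you first pass to the worst-case endpoint $\alpha=\max_j c_j(m)/\sum_j c_j(m)$ and cross-multiply, whereas the paper writes $E(m)$ as a weighted average of two ratios and bounds each prefactor by $1$. Your tightness argument via the $K=1$ case is also exactly the one the paper gives.
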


\begin{proof}
   Note $\frac{a+b}{c+d} = \frac{a}{c}\gamma + \frac{b}{d}(1-\gamma)$ where $\gamma = \frac{c}{c+d}$ for any $a,b,c,d \in \bb R$. Applying this to our case, we get 
   \begin{align}
       E(m) = \frac{c_{\neg k}(m)}{\alpha (c_{\neg k}(m)+c_k(m))(K-1)} w + \frac{c_k(m)}{\alpha(c_{\neg k}(m) + c_k(m))}R(m)(1-w) 
   \end{align}
   where $w = \frac{\alpha (c_{\neg k}(m)+c_k(m))}{\alpha (c_{\neg k}(m)+c_k(m))(K-1 + \bb E N_1)} \in [0,1]$. Now we split into two cases for $K=1$ and $K>1$. For the case $K=1$ we only have a single iterative state and so $C_{\neg k}(m) = 0$, $\alpha = 1$. In this case we trivially have $E(m) = R(m)$. Now suppose $K>1$. In this case, since $\alpha(c_{\neg k}(m) + c_k(m)) \geq \max_{j \in [K]}c_j(m)$, we have 
   \begin{align}
       \frac{c_k(m)}{\alpha(c_{\neg k}(m) + c_k(m))} \leq \frac{c_k(m)}{ \max_{j \in [K]}c_j(m)} \leq 1
   \end{align}
   Which means we can bound $E(m)$ by removing the term in front of $R(m)$,
    \begin{align}
       E(m) \leq \frac{c_{\neg k}(m)}{\alpha (c_{\neg k}(m)+c_k(m))(K-1)} w + R(m)(1-w) 
   \end{align}
   By the same logic, we have 
   \begin{align}
       \frac{c_{\neg k}(m)}{\alpha (c_{\neg k}(m)+c_k(m))(K-1)} \leq \frac{c_{\neg k}(m)}{\max_{j \in [K]}c_j(m)(K-1)} = \frac{\sum_{j \neq k}c_j(m)}{\max_{j \in [K]}c_j(m)(K-1)} \leq 1
   \end{align}
    which means
    \begin{align}
       E(m) & \leq w + R(m)(1-w)  \\
       & \leq  R(m) 
   \end{align}
   Where the last line uses the fact that $R(m) \geq 1$ since $N_1\geq 0$. The bound is tight because when $K=1$ we have $E(m) = R(m)$. This completes the proof
   
\end{proof}

\clearpage

\section{Additional Details} \label{app:details}

\subsection{Algorithms}

Note here we use $\tilde {\bm x}$ to denote a batch of inputs $[\bm x_1,...,\bm x_m]$ for $m$ different chains, and the same for other variables.

\begin{algorithm}
\caption{Vectorized MCMC algorithm with \texttt{vmap(sample)} function}
\label{alg:standardmcmc_vec}
\begin{algorithmic}[1]
\STATE \textbf{Inputs}: sample $\tilde {\bm x}_0$, seed $\tilde r_0$ 
\FOR{$i \in \{1,...,n\}$}
\STATE generate $\tilde {\bm x}_i, \tilde r_i\gets\verb|vmap(sample)(|\tilde {\x}_{i-1},\tilde r_{i-1}\verb|)|$
\ENDFOR
\STATE \textbf{return} $\tilde {\bm x}_1,\ldots,\tilde {\bm x}_n$
\end{algorithmic}
\end{algorithm}

\begin{algorithm}
\caption{Vectorized FSM MCMC algorithm with  \texttt{vmap(step)} function}
  \label{alg:fsmmcmc_vec}
  
\begin{algorithmic}[1]
\STATE \textbf{input:} initial value $\tilde {\bm x}_0$, \# samples $n$
\STATE \textbf{initialize:} $\tilde z = \texttt{vmap(init)}(\tilde {\bm x}_0), \tilde X = \text{list()}, \tilde B = \text{list()}$
\STATE Set $\tilde t=0$ and $\tilde k=0$
\WHILE{$\min_{\tilde t_i \in \tilde t}\{\tilde t_i\} < n$}
\STATE $(\tilde k, \tilde z,\tilde {\texttt{isSample}}) \leftarrow \verb|vmap(step)|(\tilde k, \tilde z)$
    \STATE append current sample value $\tilde {\bm x}$ stored in $\tilde z$ to $\tilde X$
    \STATE append $\tilde {\texttt{isSample}}$ to $\tilde B$
    \STATE update sample counter $\tilde t \gets \tilde t+\tilde {\texttt{isSample}}$
\ENDWHILE
\STATE \textbf{return} $\tilde X[\tilde B]$
\end{algorithmic}
\end{algorithm}

\begin{algorithm}[H]
\caption{Transition kernel for elliptical slice sampler with log-pdf $\log p$, covariance matrix $\Sigma$.}
\label{alg:ellip_kernel}
\begin{algorithmic}[1]
\STATE \textbf{Input:} Sample $\bm x$
\STATE Choose ellipse $\bm \nu \sim \mc N(\bm 0, \Sigma)$
\STATE Set threshold $\log y \gets \log p(\bm x) + \log u : u \sim \mc U[0,1]$
\STATE Set bracket $[\theta_\text{min}, \theta_\text{max}]\gets [\theta - 2\pi, \theta] : \theta \sim \mc U[0,2\pi]$
\STATE Make proposal $\bm x' \gets \bm x \cos \theta + \bm \nu \sin \theta$
\WHILE{$\log p(\bm x') > \log y$}
    \STATE Shrink bracket and update proposal:
     \IF{$\theta < 0$} 
     \STATE  $\theta_{\min} \gets \theta$
     \ELSE  
     \STATE $\theta_{\max} \gets \theta$
     \ENDIF
     \STATE $\bm x' \gets \bm x \cos \theta + \bm \nu \sin \theta : \theta \sim \mc U[\theta_{\min}, \theta_{\max}]$
\ENDWHILE
\STATE \textbf{Return} $\bm x'$
\end{algorithmic}
\end{algorithm}

\begin{algorithm}
\caption{\texttt{bundled\_step} as input to \texttt{amortized\_step}.}
\label{alg:fsm_step_bundled_amortized}
\begin{algorithmic}[1]
\STATE \textbf{Input:} Algorithm state $k$, variables $z$
\STATE \texttt{doComputation} = False
\WHILE{\textbf{not} \texttt{doComputation}}
\STATE $(k,z, \texttt{isSample}, \texttt{doComputation})$$ \gets \texttt{step}(k,z)$
\ENDWHILE
\STATE \textbf{Return} $(k,z, \texttt{isSample},\texttt{doComputation})$
\end{algorithmic}
\end{algorithm}
\vspace{-10pt}

\subsection{FSM Construction Procedure for Programs Considered in Main Text}

\paragraph{Detailed Construction in the two sequential while loops case}
Here, we describe the FSM construction for programs with two sequential while loops in detail. Following the constructions introduced in the main body, let us note
\begin{itemize}
    \item  $\mathcal F_1 \coloneqq (\{ S_{11}, S_{12}, S_{13}\}, \mathcal Z, \delta_1, S_{11}, S_{13})$ the FSM associated to $B_1$
    \item $\mathcal F_2 \coloneqq (\{ S_{21}, S_{22}, S_{23}\}, \mathcal Z, \delta_2, S_{21}, S_{23})$ the FSM associated to $B_2$
\end{itemize}
 By construction, $S_{21}$ is empty. Both FSMs share the same input space, which is the set of local variables values associated to the original \texttt{sample} function. \\

Then, the resulting FSM representation of \texttt{sample} is $(\mathcal S, \mathcal Z, \delta, S_{11}, S_{23})$, where
\begin{itemize}
\item $\mathcal S = \{S_{11}, S_{12}, S_{13}, S_{22}, S_{23}\}$.
\item Transition function $\delta$ defined as
\begin{align}
\delta(S, z) =
\begin{cases}
\delta_1(S, z)& \text{if} \quad S \in \{S_{11}, S_{12}\} \\
\delta_2(S_{21}, z) & \text{if} \quad S  = S_{13} \\
\delta_2(S, z) & \text{if} \quad S = \{S_{22}, S_{23}\},
\end{cases}
\end{align}
\end{itemize}
whose construction illustrates the ``FSM stitching'' operation performed.

\paragraph{Detailed Construction in the two nested while loop case}
Here, we describe the FSM construction for programs with two nested while loops.
Following the constructions introduced in the main body, let us note

 $\mathcal F_{i} \coloneqq (\{ S_{i1}, S_{i2}, S_{i3}\}, \mathcal Z, \delta_i, S_{i1}, S_{i3})$ the (inner) FSM associated to $B_2$. 

Then, the resulting FSM representation of \texttt{sample} is $(\mathcal S_o, \mathcal Z, \delta_o, S_1, S_{3})$, where
\begin{itemize}
\item $\mathcal S_o = \{S_1, S_{i1}, S_{i2}, S_{i3}, S_{3}\}$.
\item Transition function $\delta_o$ defined as
\begin{itemize}
    \item  $\delta_o(S_1, z) =\delta_o(S_{i3}, z) $ runs the outer while loop condition on $z$, goes to $S_{i1}$ if True, and $S_3$ otherwise.
    \item  $\delta_o(S, z) = \delta_i(S, z)$ if $S \in \{S_{i1}, S_{i2}\}$
\end{itemize}
\end{itemize}

\subsection{FSM Construction Procedure for General Programs.}

The FSM construction procedures in \cref{sec:FSM} for specific programs can be automated to handle programs with any finite number of sequential and/or nested while loops. We show how this works using a simple programming language defined by the following grammar:

\begin{equation*} \label{}
\begin{aligned}
  \textrm{Operation op} &\Coloneqq  \quad \quad \quad & \textrm{op}_1 \mid \textrm{op}_2 \mid \dots \mid \textrm{op}_n \\
  \textrm{Simple Block SB} & \Coloneqq &\langle\textrm{op}\rangle \mid \langle\textrm{op}\rangle \,\, \langle\textrm{SB}\rangle \\
  \textrm{While Block WB} & \Coloneqq &\textrm{W} \, \left \{ \,\, \langle \textrm{Bs} \rangle \,\, \right \} \\
  \textrm{Block B} & \Coloneqq & \langle\textrm{SB}\rangle \mid \langle\textrm{WB}\rangle \\
  \textrm{Blocks Bs} & \Coloneqq & \langle\textrm{B}\rangle \mid \langle\textrm{Bs}\rangle \,\, \langle\textrm{B}\rangle
\end{aligned}
\end{equation*}
This grammar contains just enough to describe non-trivial programs containing any finite number of (possibly nested) while loops.
Given a program generated with this grammar, we construct an FSM variant of it by:
\begin{enumerate}
    \item Parsing the program into a parse tree.
    \item ``Coarsening'' the parse tree into a tree where each node is a block.
    \item Applying a recursively-defined function which transforms the program into an FSM.
    \item Collapsing any spurious empty states.
\end{enumerate}

\subsubsection{Step 1: Parsing the program into a parse tree.}
First, note that as this grammar is context-free, every program $P$ can be represented by the yield of a parse tree (e.g. Theorem 5.12 \cite{hopcroft2001introduction}). Consequently, one can use any context-free parser to obtain a parse tree of the program. We choose the look-ahead, left-to-right, rightmost derivation (LALR) parser \cite{deremer1969practical}.
Figure \ref{fig:parse-tree} shows the resulting parse tree associated with the program ``op1W\{op1W\{op1\}\}''.

\begin{figure}[htbp]
    \includegraphics[width=\textwidth]{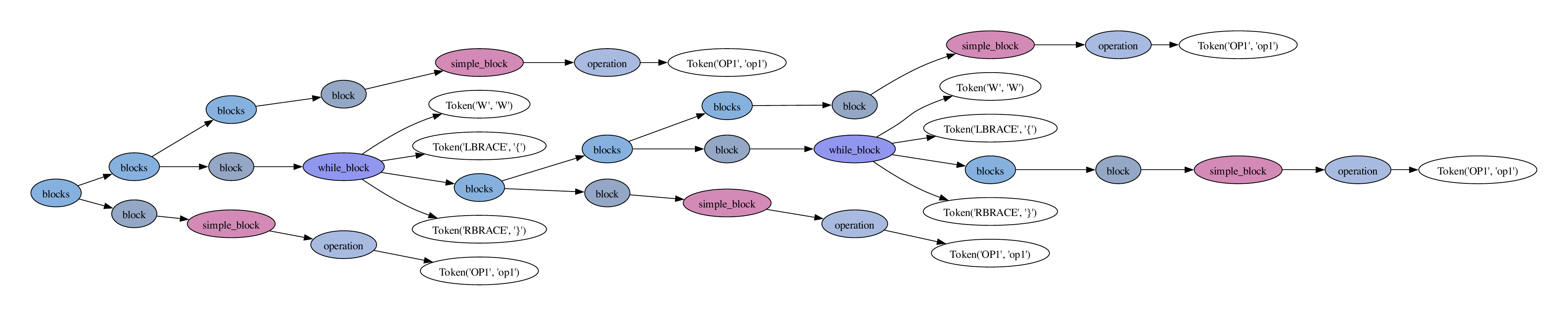}
    \caption{Parse tree of the program ``op1W\{op1W\{op1\}\}''.} \label{fig:parse-tree}
\end{figure}

\subsubsection{Step 2: Coarsening the parse tree.}
To convert the parse tree into an FSM, we first derive a simplified version of it with: (i) no operation nodes (e.g. content of simple blocks), (ii) no block markers (e.g. While tokens and braces), (iii) no ``blocks'' and ``block'' nodes (these are helpful to clarify the grammar's definition, but do not carry information about a given program). The first two properties can be achieved by simply pruning the tree of all nodes that are not blocks, which is a well-known operation and not shown here. The output of this procedure is guaranteed to be a tree, as all such nodes (i) and (ii) are terminal.
The second property can be achieved by applying the procedure in \cref{alg:coarsen}.

%

\begin{algorithm}
\caption{Gathering meaningful children}
\begin{algorithmic}[1]
\STATE \textbf{function} \textsc{gather\_children}(\textit{node}) \textbf{returns} list of Node
\STATE \quad all\_children $\gets$ $\cup$ \textsc{map}(\textsc{gather\_children}, node.children)
\STATE \quad all\_children $\gets$ $\cup$ \textsc{filter}(\textsc{non\_terminal}, all\_children) \COMMENT{Discard ``op'' nodes and markers}
\IF{node.type is ``block'' or ``blocks''}
    \STATE \quad \textbf{return} all\_children \COMMENT{Discard the node, and only return the children}
\ELSE
    \STATE \quad \textbf{return} [Node(node.type, all\_children)]
\ENDIF
\STATE \textbf{end function}
\end{algorithmic}
\end{algorithm}


\begin{algorithm}
\caption{Coarsening the parse tree}
\begin{algorithmic}[1]
\STATE \textbf{function} \textsc{coarsen\_tree}(\textit{node}) \textbf{returns} Node
\STATE \quad all\_children $\gets$ $\cup$ \textsc{map}(\textsc{gather\_children}, children(node))
\STATE \quad \textbf{return} Node(``blocks'', all\_children)
\STATE \textbf{end function}
\end{algorithmic}
\label{alg:coarsen}
\end{algorithm}

\noindent
After applying the coarsening procedure to the parse tree of the same program ``op1W\{op1W\{op1\}\}'',
we obtain the coarsened parse tree shown in Figure \ref{fig:coarsened-tree}.
The algorithm has been modified to return block labels in order to track the graph manipulation operations done in the next steps.
\begin{figure}[htbp]
    \includegraphics[width=\textwidth]{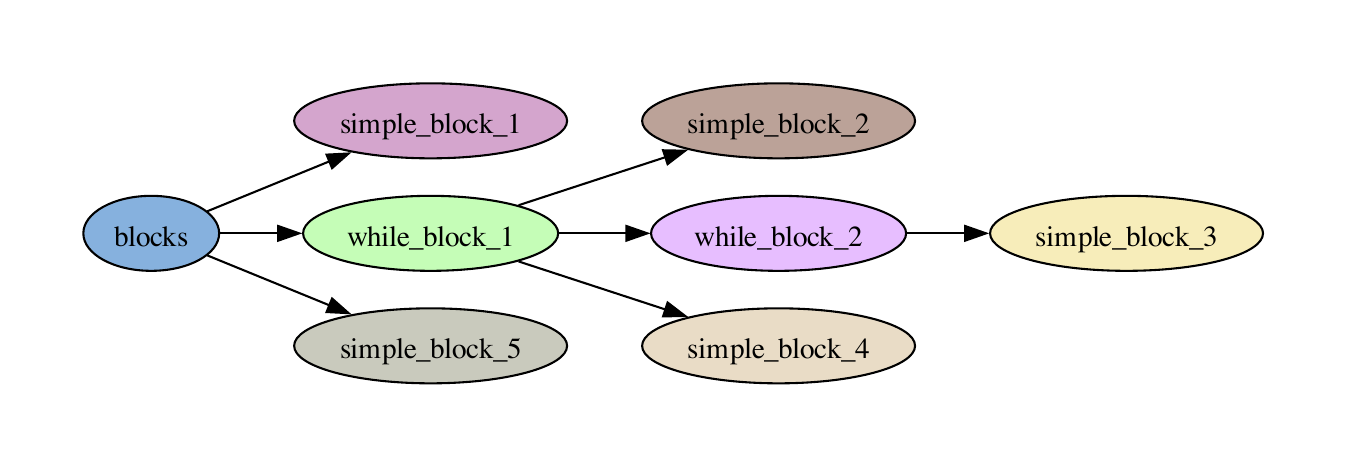}
    \caption{Coarsened parse tree of the program ``op1W\{op1W\{op1\}\}''.} \label{fig:coarsened-tree}
\end{figure}

\subsubsection{Step 3: Transforming the coarsened parse tree into an FSM.}
Now we transform the coarsened parse tree into an FSM. Our approach leverages the fact that any subtree of the parse tree is itself a valid (sub)program\footnote{It is a ``subprogram'' of the original program in the sense that it is contained in the original program.}, and should thus be handle-able by the conversion procedure we define. Each (sub)program is either a single simple block, an arbitrary while block (looping over a sequence of blocks), or the root program (e.g. a sequence of blocks).

This suggests a recursive conversion procedure which computes a single-state FSM for the block at the leaf nodes, and then (recursively) combines these FSMs into larger FSMs at each layer of the parse tree, until we reach the root node, at which point the final FSM is returned. The FSM at leaf nodes (i.e. simple blocks)  is a simple single state FSM. The FSM at non-leaf nodes (e.g. while blocks or the entire program) is constructed by (i)
 connecting the terminal state of the $i^{th}$ child's FSM to the initial state of the $(i+1)^{th}$ child's FSM, for every child $i$ of the non-leaf node except the last, and (ii) if the node is a while block, (a) adding an edge from the last child's FSM to the first child's FSM,
 and (b) assigning the initial and terminal state of the current node's FSM to empty states with an edge between them. Additions (a) and (b) account for the fact that unlike standard blocks, while blocks have bodies that may be looped over, or skipped.

Using empty states as initial and terminal FSM of while blocks allows our FSM conversion procedure to be ``context-free''---each subFSM is obtained by assembling sub-subFSMs, without knowledge about the nature of the blocks being connected. Without these empty states, we would require more complicated logic to perform the connection. For instance, 
in a program of the form ``op1W\{op1\}W\{op1\}op2'', one needs to connect the first op1 to the last op2 to encode the fact that both while loops may be skipped.

The full algorithm is given in Algorithm \ref{alg:fsm-creation}.
The resulting FSM for the program ``op1W\{op1W\{op1\}\}'' is shown in Figure \ref{fig:fsm-empty-nodes}. The nodes starting with an ``I'' (resp. ``T'') are the (empty) initial (resp. terminal) states of each while block.
The transitions are labeled using the following convention:
\begin{itemize}
    \item ``E$\langle n\rangle$'' means ``enter loop number $n$''
    \item ``C$\langle n\rangle$'' means ``continue loop number $n$''
    \item ``S$\langle n\rangle$'' means ``skip loop number $n$''
    \item ``F$\langle n\rangle$'' means ``exit loop number $n$''
    \item ``N'' means ``next block'' (which are created when connecting the different subFSMs of the root program)
\end{itemize}

\begin{figure}[p]
    \includegraphics[width=\textwidth]{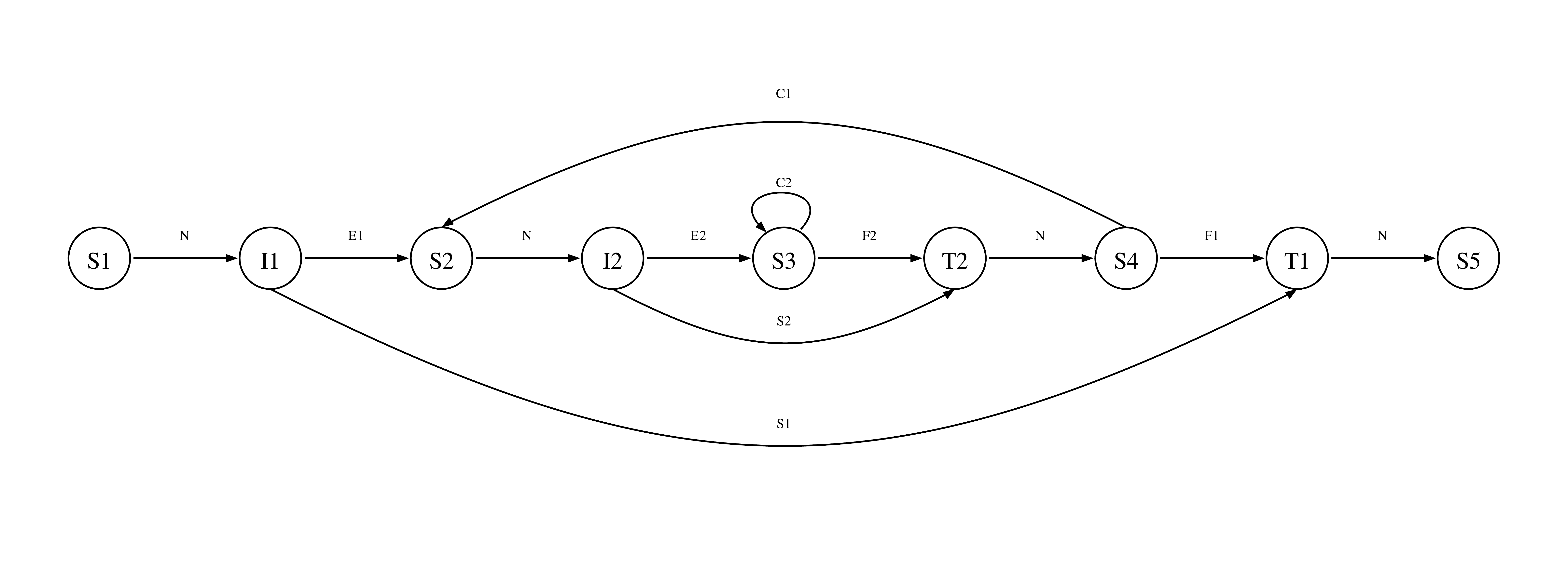}
    \vspace{-50pt}
    \caption{ Intermediate FSM obtained from the coarsened parse tree of the program ``op1W\{op1W\{op1\}\}''. This intermediate FSM contains empty states, which are collapsed to obtain the final FSM.
    } \label{fig:fsm-empty-nodes}
\end{figure}

\begin{algorithm}
\caption{FSM creation from a coarsened parse tree}
\label{alg:fsm-creation}
\begin{algorithmic}[1]
\STATE \textbf{function} \textsc{coarse\_tree\_to\_fsm}(\textit{node}) \textbf{returns} FSM
\IF{node.type is ``while\_block'' or ``blocks''}
    \STATE \quad c\_fsms $\gets$ \textsc{map}(\textsc{coarse\_tree\_to\_fsm}, node.children)
    \STATE \quad nodes $\gets \cup$ \textsc{map}(\textsc{get\_nodes}, c\_fsms)
    \STATE \quad edges $\gets \cup$ \textsc{map}(\textsc{get\_edges}, c\_fsms)
    \STATE \quad labels $\gets \cup$ \textsc{map}(\textsc{get\_labels}, c\_fsms)
    \STATE \quad // Connect the different subFSMs
    \STATE \quad conn\_edges $\gets$ \{(c\_fsms[i-1].terminal, c\_fsms[i].initial) for $i = 1,\dots, \text{len}(c\_fsms)-1$\}
    \STATE \quad conn\_labels $\gets$ \{``N'', \dots, ``N''\}
    \STATE \quad edges, labels $\gets$ edges $\cup$ conn\_edges, labels $\cup$ conn\_labels
    \STATE \quad i $\gets$ \textsc{counter}()
    \IF{node.type == ``while\_block''}
        \STATE \quad \quad terminal\_state, initial\_state $\gets$ Node(``terminal''), Node(``initial'')
        \STATE \quad \quad enter\_loop\_edge $\gets$ (initial\_state, c\_fsms[0].initial)
        \STATE \quad \quad exit\_loop\_edge $\gets$ (c\_fsms[-1].terminal, terminal\_state)
        \STATE \quad \quad continue\_loop\_edge $\gets$ (c\_fsms[-1].terminal, c\_fsms[0].initial)
        \STATE \quad \quad skip\_loop\_edge $\gets$ (terminal\_state, initial\_state)
        \STATE \quad \quad nodes $\gets$ nodes $\cup$ \{terminal\_state, initial\_state\}
        \STATE \quad \quad edges $\gets$ edges $\cup$ \{continue\_loop\_edge, skip\_loop\_edge, enter\_loop\_edge, exit\_loop\_edge\}
        \STATE \quad \quad labels $\gets$ labels $\cup$ \{``E'' + i, ``F'' + i, ``C'' + i, ``S'' + i\}
        \STATE \quad \quad \textbf{return} FSM(nodes, edges, labels)
    \ELSE
        \STATE \quad \quad \textbf{return} FSM([\texttt{simple\_block}], [\,], [\,])
    \ENDIF
\ENDIF
\STATE \textbf{end function}
\end{algorithmic}
\end{algorithm}

\subsubsection{Step 4: Collapsing the empty states.}

The empty states produced by the step above are side effects of the conversion procedure used: the final FSM should not contain them. To this end, we next describe a post-processing, iterative procedure which removes them from the FSM. At each iteration, the procedure selects (if any) an empty state, and---unless it transitions into itself---(i)removes from the list of FSM nodes and (ii) adds edges connecting the parents of the empty state to its children. 

The labels of the new edges are obtained by concatenating the ``parent-to-empty state'' and ``empty state-to-children'' edges.
The procedure is given in Algorithm \ref{alg:collapse-empty-states}, and the resulting FSM for the program ``op1W\{op1W\{op1\}\}'' is shown in Figure \ref{fig:fsm-final}.
The procedure results in
FSMs that agree those in the main text for single while loops, two nested while loops, and two sequential while loops. There are two minor caveats, which we discuss below.

\begin{remark}[Handling empty states with self-transitions]
The procedure above removes all empty states, apart from those with self-transitions.
These self-transitions are not present before collapsing the empty states, and arise in the midst of the procedure,
after collapsing a subset out of all the FSM empty state.
Importantly, such self-transitions are pathological: as the state of the program does not change when
performing a self-transition into an empty node, if such a self-transition occurs, it must keep occurring indefinitely afterwards.
Thus, programs resulting into FSM with self-recurring empty states form a class of ``potentially non-halting'' programs.
\end{remark}

\begin{remark}[Handling impossible transitions]
The node collapsing process results in the creation of ``multistep'' edges. These transitions involve evaluating multiple conditions in a single step.
Sometimes, these conditions are mutually exclusive: in that case, this multistep transition is impossible, and can be removed from the graph.
By removing such transitions after each step of the procedure,
the routine can be shown to not add any duplicated edges to the FSM, as proved in the next proposition.
\end{remark}

\begin{algorithm}
\caption{Collapsing empty states}
\label{alg:collapse-empty-states}
\begin{algorithmic}[1]
\STATE \textbf{function} \textsc{collapse\_empty\_states}(fsm) \textbf{returns} FSM
\STATE \quad edges $\gets$ fsm.edges
\STATE \quad labels $\gets$ fsm.labels
\FORALL{node \textbf{in} \textsc{filter}(\textsc{empty} $\lvert$ \textsc{Not initial} $\lvert$  \textsc{Not final}, fsm.nodes)}
    \IF{node $\in$ node.children}
        \STATE \quad \textbf{continue}
    \ENDIF
    \FORALL{parent, child \textbf{in} \textsc{product}(node.parents, node.children)}
        \STATE \quad new\_edge $\gets$ (parent, child)
        \IF{new\_edge $\in$ edges}
            \STATE \quad \textbf{error} \COMMENT{Edge already exists}
        \ENDIF
        \STATE \quad new\_label $\gets$ \textsc{AND}(parent.label, child.label)
        \IF{\textsc{Impossible}(new\_label)}
            \STATE \quad \textbf{continue}
        \ENDIF
        \STATE \quad edges $\gets$ edges $\cup$ \{new\_edge\}
        \STATE \quad labels $\gets$ labels $\cup$ \{new\_label\}
    \ENDFOR
\ENDFOR
\STATE \quad nodes $\gets$ \textsc{filter}(\textsc{non\_empty}, fsm.nodes)
\STATE \quad \textbf{return} FSM(nodes, edges, labels)
\STATE \textbf{end function}
\end{algorithmic}
\end{algorithm}

\begin{figure}[t]
    \includegraphics[width=\textwidth]{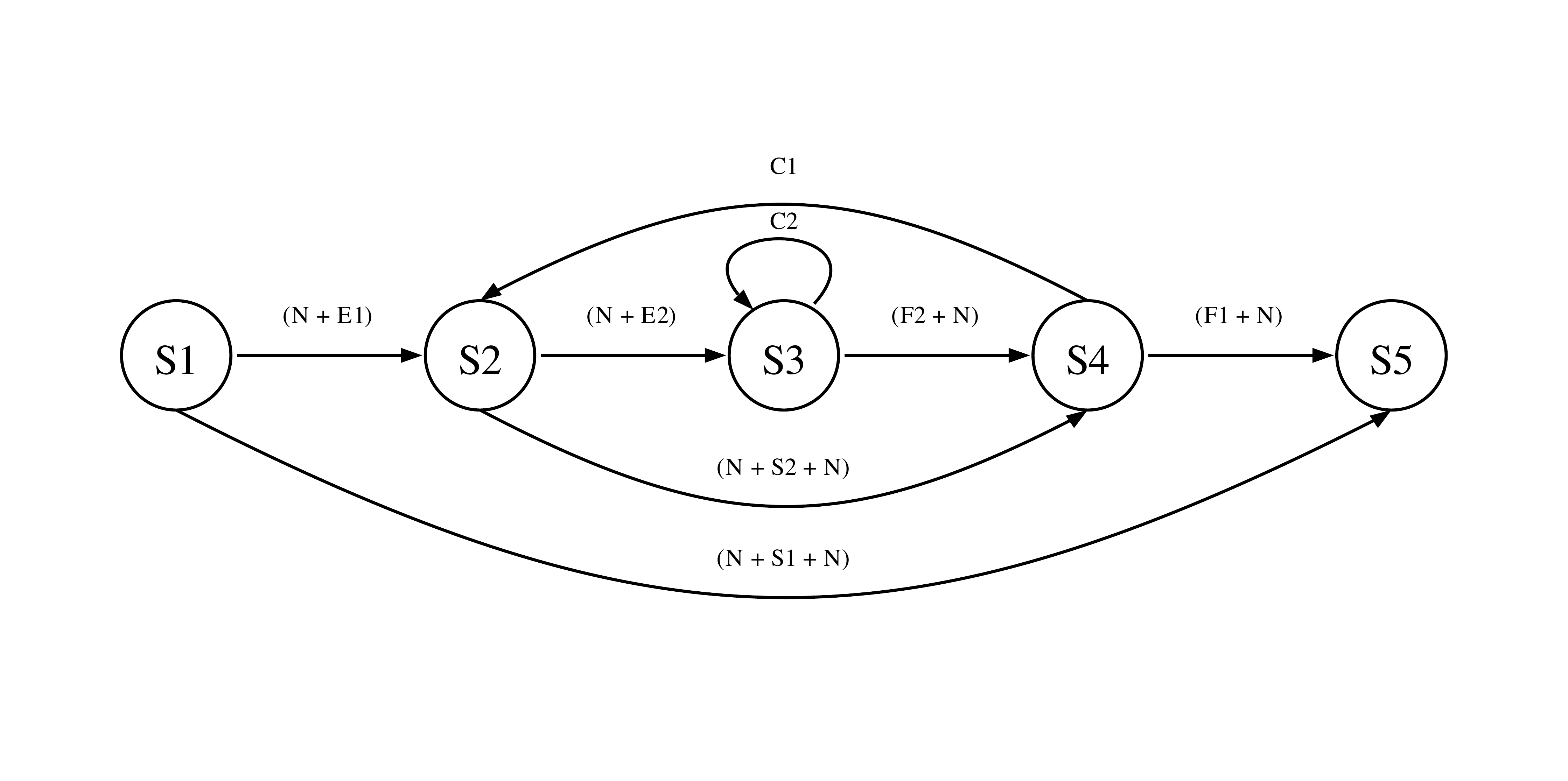}
    \vspace{-50pt}
    \caption{FSM for the program ``op1W\{op1W\{op1\}\}'', with empty states collapsed, using the automatic procedure. Note that this matches the structure of the FSM for the No-U-Turn sampler in \cref{fig:fsm}, which was produced using (essentially) a special case of this procedure for programs with two nested while loops.} \label{fig:fsm-final}
\end{figure}

\clearpage

\subsection{Implementation Details}

\textbf{FSM Step Function Design.}  At each step of the FSM, one always executes a block $S_k : z \mapsto z'$ which updates the inputs, before calling the transition function $\delta$ on $(k,S_k(z))$ to determine which block should be run next. One can therefore combine these into a single (state) function $T:(k,z) \mapsto (\delta(k,S_k(z)), S_k(z))$, or equivalently, a collection of functions $T_1,...,T_K$ where $T_k : z \mapsto (\delta(k,S_k(z)),S_k(z))$. In practice, our implementation of \cref{alg:fsm_step} calls a single \verb|switch| over these composite state functions to streamline the code. The same composite state functions are also used in our implementation of \verb|bundled_step| in \cref{alg:fsm_step_bundled}. 

\textbf{FSM Runtime Design.}  In our experiments, we use a native Python while loop in \cref{alg:fsmmcmc} (and its \verb|vmap|-ed variant in \cref{alg:fsmmcmc_vec}). Inside the loop, we use a (JIT compiled) \verb|jax.lax.scan| to run the FSM step function for blocks of $t=100$ steps, when drawing $n>100$ samples. This gives us the flexibility to store the results in dynamically shaped lists/arrays and transport to the CPU for faster array slicing when CPU memory is available, whilst still reaping the benefits of JIT compilation.

\textbf{Compilation.} We JIT compile both \verb|vmap(step)| and \verb|vmap(sample)| functions for each MCMC algorithm implementation (here \verb|step| refers to any of the basic variant \cref{alg:fsm_step}, bundled variant \cref{alg:fsm_step_bundled}, amortized variant \cref{alg:fsm_step_amortized}, or the bundled and amortized variant \cref{alg:fsm_step_bundled_amortized}). For Delayed Rejection and the Elliptical Slice Sampler (with $n=25$) we remove compilation time to get more accurate results or the runs with small numbers of chains $m$, due to the low cost of the computations involved.

\textbf{JAX implementation.} When comparing to non-FSM implementations, we used BlackJAX \citep{cabezas2024blackjax} for fair comparison with our method, since we use BlackJAX primitives for the key computations in some of our algorithms (e.g. HMC-NUTS). Where a BlackJAX implementation was not available (e.g. Delayed Rejection), we wrote our own for fair comparison with our FSM implementation.

\textbf{Bundling with Amortized Step.} As discussed in \cref{sec:opt}, one cannot non-trivially use \verb|bundled_step| as the step function inside \verb|amortized_step|, because a given sequence of states may require the amortized computation to be updated in the middle, and nothing in \verb|bundled_step| flags this. To reap some of the benefits of step bundling when amortizing an expensive computation $g$ (e.g. expensive log-pdf calculations), we use another step function defined in \cref{alg:fsm_step_bundled_amortized}, as the `step' function called inside \verb|amortized_step|. This step function iteratively runs \verb|step| (modified to return the \verb|doComputation| flag) until \verb|doComputtion| = True - ironically, using a while loop. When \verb|vmap|-ed, each `step' inside \verb|amortized_step| now executes sequences of cheap states that do not require $g$, until $g$ is required again for all chains. This works effectively when $g$ is called inside the iterative state, which is typically the case when $g = \log p$.

\subsection{Additional Results}

   \begin{figure*}
[htbp]
   \includegraphics[width = \textwidth]{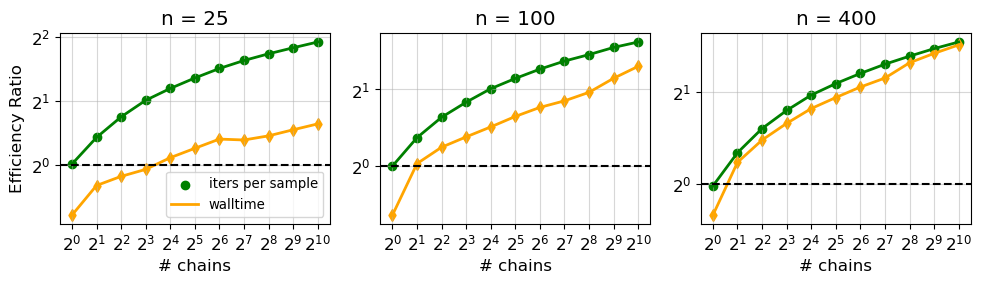}
      \caption{Efficiency Ratio of our elliptical slice FSM against BlackJAX's elliptical slice algorithm (as measured by estimated $R(m) = \bb E[\max_{j \in [m]} N_j]/\bb E[N_1]$ (i.e. iters per sample) and walltime) on the Real Estate Dataset described in \cref{sec:experiments}  when restricting the dataset to the first $n \in \{25,100,400\}$ datapoints. The relative efficiency of the FSM improves as the number of chains used increase, and as the log-likelihood cost increases. When $n=400$, we almost achieve the theoretical bound $R(m)$ in speed-ups.}
\end{figure*}

\begin{figure*}
[htbp]
   \includegraphics[width = \textwidth]{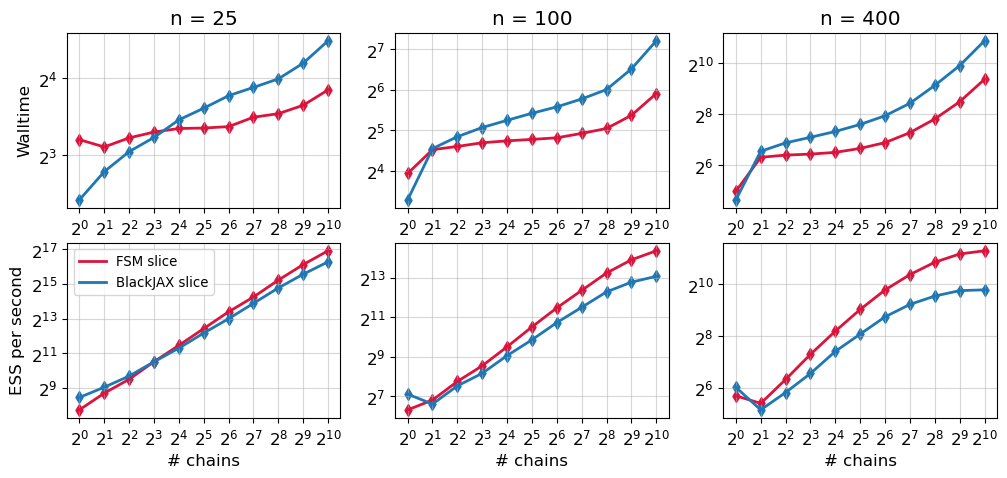}
   \caption{Walltimes and ESS per second using the Elliptical Slice Sampler (non-FSM vs FSM implementation) on the Real Estate Dataset described in \cref{sec:experiments}, when restricting the dataset to the first $n \in \{25,100,400\}$ datapoints. For each dataset size, the best walltime and ESS/second is obtained by both implementations when using $m=1024$ chains. Our FSM implementation can obtain the greatest efficiency for all dataset sizes. As the log-likelihood cost increases (the log-likelihood in GPR regression costs $\mc O(n^3)$), we see the FSM efficiency gain increase, reflecting the benefits of amortization.}
\end{figure*}

\begin{table}[ht]
\centering
\begin{tabular}{lrrr}
\toprule
Distribution             & NUTS-FSM & Standard NUTS & FSM Speed-up \\
\midrule
Real-Estate GPR     & 863.6       & 274.1       & 3.15x        \\
Soil                & 0.014   & 0.006     & 2.43x        \\
Pilots              & 3.538     & 3.869         & 0.91x        \\
\bottomrule
\end{tabular}
\caption{ESS/sec comparison for NUTS with and without FSM acceleration on Real-Estate GPR sampling problem considered in \cref{sec:experiments}.2, and two sampling problems (Soil and Pilots) from the PosteriorDB database \cite{magnussontexttt}. Results are averaged over 1000 samples and 128 chains. As with \cref{sec:experiments}.4 in the main text, we use 400 warm-up steps to tune the mass matrix and step size of our FSM implementation, and the BlackJAX (non-FSM) implementation. We find substantial speed-ups in 2/3 problems. Both of these problems (Real-estate GPR and Soil) have an expensive log-pdf (and gradient) whilst the log-pdf in Pilots is much cheaper. In general, when the number of integration steps taken by NUTS is moderate to high on average (so that most time is spent integrating along the trajectory), an expensive log-pdf makes the additional cost of executing all other blocks in the FSM \texttt{step} function (when integrating) relatively smaller. That is, the gap between $E(m)$ and $R(m)$ (as defined in \cref{sec:analysis}) is smaller, hence increasing the relative efficiency of the FSM. The opposite happens when the log-pdf is cheap (as in Pilots). We also note that computational resources did not balance out across chains in Soil and Pilots after 1000 samples. We therefore would expect to see larger speed-ups when running for longer.}
\end{table}

\end{document}